\DeclareMathAlphabet\mathbfcal{OMS}{cmsy}{b}{n}
\DeclareMathOperator*{\esssup}{ess\,sup}
\newtheorem{theorem}{Theorem}
\newtheorem{corollary}{Corollary}
\newtheorem*{proof of Theorem*}{Proof of Theorem 3}
\newtheorem{proof of Lemma}{Proof of Lemma}
\newtheorem{definition}{Definition}
\newtheorem{lemma}{Lemma}
\begin{document}
\vspace{-0.2 in}
\title{Achieving the Age-Energy Tradeoff with a Finite-Battery Energy Harvesting Source}
\author{\IEEEauthorblockN{Baran Tan Bacinoglu\IEEEauthorrefmark{1},  Yin Sun\IEEEauthorrefmark{3}, Elif Uysal-Biyikoglu\IEEEauthorrefmark{1}, and Volkan Mutlu \IEEEauthorrefmark{1}}
\IEEEauthorblockA{\IEEEauthorrefmark{1}METU, Ankara, Turkey,
\IEEEauthorrefmark{3}Auburn University, AL, USA\\
 E-mail:  barantan@metu.edu.tr, yzs0078@auburn.edu, uelif@metu.edu.tr,  volkan.mutlu@metu.edu.tr}
}

\bibliographystyle{IEEEtran}

\maketitle
\vspace{-0.1 in}
\def\eg{\emph{e.g.}}
\def\ie{\emph{i.e.}}

\begin{abstract}
 We study the problem of minimizing the time-average expected Age of Information for status updates sent by an energy-harvesting source with a finite-capacity battery. In prior literature, optimal policies were observed to have a threshold structure under Poisson energy arrivals, for the special case of a unit-capacity battery. In this paper, we generalize this result to any (integer) battery capacity, and explicitly characterize the threshold structure. We obtain tools to derive the optimal policy for arbitrary energy buffer (i.e. battery) size. One of these results is the unexpected equivalence of the minimum average AoI and the optimal threshold for the highest energy state.
\end{abstract}
\begin{IEEEkeywords}Age of Information; age-energy tradeoff; threshold policy; optimal threshold; energy harvesting; battery capacity\end{IEEEkeywords}
\section{Introduction}
The Age of Information (AoI) was proposed in \cite{ Kaul2011, Kaul2012} as a performance metric that measures the freshness of information in status-update systems. For a flow of information updates sent from a source  to a destination, status age is defined as the time elapsed since the newest update available was generated at the source. That is, if $u(t)$ is the largest among the time-stamps of all packets received  by time $t$, status age is defined as:
\begin{equation}
\Delta(t)= t-u(t),
\end{equation}
The \emph{Age of Information} (AoI) usually refers to the time-average of $\Delta(t)$. AoI is a particularly relevant performance metric for status-update applications that have growing importance in social networks, remote monitoring \cite{ZviedrisESMS10, blueforce}, machine-type communication (smart cities, industrial manufacturing, telerobotics, IoT).

AoI was analyzed under various queueing system models, service disciplines and queue management policies in recent literature (\eg, \cite{Ephremides2013, Ephremides2014, Huang2015, Pappas2015, Ephremides2016, Najm2016, DBLP:journals/corr/YatesK16, StatusUpdateHARQ}.  The control and optimization of AoI for an active source that can generate updates at will, was studied in \cite{YinSunInfocom2016, YinIT2017}. 
     
The relation of energy and AoI was studied as early as 2015:  The problem of AoI-optimal generation of status updates when the source is constrained by an arbitrary sequence of energy arrivals was formulated in~\cite{TanITA2015}, resulting in the optimal offline solution and an online policy. The study in \cite{2015ISITYates} considered the optimization of AoI under a long-term average rate of energy harvesting, when update transmissions are subject to random delays in the network. Both studies observed that AoI-optimal policies tend to be \emph{lazy}, in the sense that they may intentionally impose a waiting time before sending the next update. That is, for maximum freshness, one may sometimes send updates at a rate lower than one is allowed to- which may be counter-intuitive at first sight. 

The online problem in \cite{TanITA2015} was extended to a continuous-time formulation with Poisson energy arrivals, finite energy storage (battery) capacity, and random packet errors in the channel in~\cite{TanISIT2017}. An age-optimal threshold policy was proposed for the unit battery case, and the achievable AoI for arbitrary battery size was bounded  for a channel with a constant error packet error probability. Optimal threshold policies for the unit battery and infinite battery capacity cases were found for a channel with no errors, in the concurrent study in~\cite{WuYang2017}. The problem of characterizing optimal policies for arbitrary battery sizes remained open.  

In \cite{twohop_energyharvesting}, the offline results in \cite{TanISIT2017} were extended considering fixed non-zero service time and the result is used to obtain a solution for the two-hop scenario. Another offline problem under energy harvesting was investigated in \cite{ArafaDelay} where the transmission delay of an update is controlled by energy consumed on its transmission.  

This paper extends~\cite{TanISIT2017}, making the following contributions:
\begin{itemize}
\item A more general description of the policy space for age-optimal scheduling, including threshold policies with age-based thresholds that are monotone in energy state, is formulated.
\item Following the study in \cite{TanISIT2017}, it was conjectured that for any battery size, the optimal threshold on the age for the highest energy state is actually equal to the minimum AoI. This conjecture is proved to be correct.
\item Optimal thresholds are obtained  numerically for integer battery size up to 5. 
\end{itemize}
\section{System Model}
\label{sec:sm}
Consider an energy harvesting transmitter that sends update packets to a destination, as illustrated in Fig \ref{sensormodelw}. Suppose that the transmitter has a finite battery which is capable of storing up to $B$ units of energy. Transmission of  each update packet consumes a unit of energy. Let $E(t)$ denote the amount of energy stored in the battery at time $t$. 
The timing of status updates are controlled by a sampler which can monitor the battery level $E(t)$ at all time $t$.  
\begin{figure}[htpb]
    \centering \includegraphics[scale=0.72]{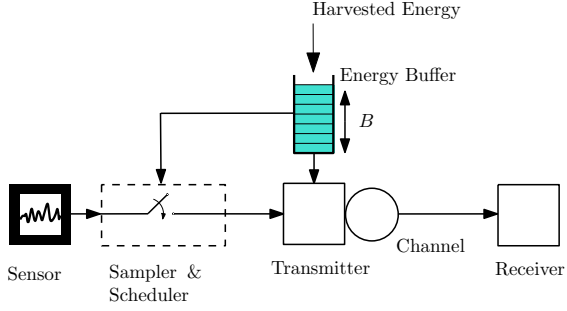}
\caption{System Model.}
\label{sensormodelw} 
\end{figure}
We assume that when an update is given to the transmitter, it is instantaneously transmitted \footnote{This corresponds to an assumption of instantaneous service, i.e, the duration of packet transmission is ignored. This is an appropriate model for sporadic transmissions (\eg, a sensor reporting temperature) where the time between two updates is typically much larger than a packet transmission duration.}. 
 
Let $N_H(t)$ and $N_U(t)$ denote the number of energy units that have arrived and the number of updates that have sent out by time $t$, respectively. We assume that the energy arrival process is Poisson with a rate $\mu_{H}$.  Energy arriving while the battery is full is lost (cannot be stored or used).

The system starts to operate at time $t=0$. Let $Z_k$ denote the generation time of the $k$-th update packet such that $0=Z_0 \leq Z_1\leq Z_2\leq\ldots$. An update policy is defined by a sequence of update instants $\pi=(Z_{0},Z_{1},Z_{2},...)$. In many status-update systems  (e.g., a sensor reporting temperature \cite{}), the update packets are only sent out sporadically and the packet size is quite small. Hence, the duration for transmitting a packet is much smaller than the difference between two subsequent update times. Motivated by this, we assume that the packet transmission time can be approximated as zero. With this assumption, the age at a status generation is zero, i.e. $\Delta(Z_{k})=0$ for any $k$, and the age at any time $t$ is:
\begin{equation}
\label{efuage}
\Delta(t)=t-Z_{N_U(t)}, t\geq 0.
\end{equation}

The battery level before the $(k+1)$-st update instant is given by the following:
\begin{align}
\label{energyavailability}
&E(t)=\min \lbrace (E(Z_{k})-1)^{+}+N_{H}(t)-N_{H}(Z_{k}),B\rbrace, &\nonumber\\
& t \in (Z_{k}, Z_{k+1}],&
\end{align}

\noindent{We first define the set of \emph{energy-causal} update policies:}
\begin{definition}
A policy $\pi$ is said to be energy-causal if no update packet is sent out when the battery is empty, i.e., $E(Z_{k})\geq 1$ for all $k \geq 1$.
\end{definition}
The information available up to some time $t$ is represented by $\mathcal{F}_{t}=\sigma(\lbrace( N_H(t'),N_U(t')),0 \leq t' < t  \rbrace)$ which is the $\sigma$-field generated by the sequence of energy arrivals and updates, i.e., $\lbrace( N_H(t'),N_U(t')),0 \leq t' < t  \rbrace$. The set of  \emph{online} update policies is defined  as follows:
\begin{definition}
An energy-causal policy is said to be online if no update instant is determined based on future information, i.e., does not depend on future events, i.e.,  $\left\lbrace Z_{k} \leq t\right\rbrace \in \mathcal{F}_{t}$ for all $t\geq 0$ and $k \geq 1$.
\end{definition}
Let $\Pi$ denote  the set of online update policies.
The time-average expected age can be expressed as:
\begin{equation}
\label{avaged}
\bar{\Delta}=\displaystyle\lim\sup_{t_{f} \rightarrow \infty}\frac{1}{t_{f}}\mathbb {E}\left[ \displaystyle\int_{0}^{t_{f}}\Delta(t)  dt \right].
\end{equation}
Let $X_{k}$ represent the inter-update duration between updates $k-1$ and $k$, i.e., $X_{k}=Z_{k}-Z_{k-1}$. Then, the time-average expected age in (\ref{avaged}) can be equivalently expressed as:
\begin{equation}
\label{avaged2}
\bar{\Delta}=\displaystyle\lim\sup_{t_{f} \rightarrow \infty}\frac{1}{2t_{f}}\mathbb {E}\left[ \displaystyle\sum_{k=1}^{N_{U}(t_{f})} X_{k}^{2}+(t_{f}-Z_{N_{U}(t_{f})})^{2}\right].
\end{equation} 
The goal of this paper is to find the optimal update policy for minimizing the time-average expected age, which is formulated as:
\vspace{-0.05 in}
\begin{equation}
\label{minavaged}
\displaystyle\min_{\pi\in\Pi}\bar{\Delta}.
\end{equation}
\section{MAIN RESULTS}
\label{sec:mainresults}

We begin with a result guaranteeing the existence of threshold-type policies that are optimal. We define such policies as follows:
\begin{definition}
\label{monotonicthresholds}
An online policy is said to be a  threshold policy if:
\begin{equation}
\label{thresholdpolicy}
Z_{k+1} =\inf \left\lbrace t \geq Z_{k} : \Delta(t) \geq \tau_{E(t)} \right\rbrace ,
\end{equation}
where $\tau_{\ell}$ denotes the threshold for sending an update when the battery level is $\ell$ for $\ell=1,\ldots, B$.
\end{definition}
Let $\Pi^{T}\subset \Pi$ be the set of threshold policies. First, we note the following:
\begin{theorem}
\label{existopthreshold}
There exists a threshold policy $\pi \in \Pi^{T}$ that solves (\ref{minavaged}).
\end{theorem}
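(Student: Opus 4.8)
The plan is to cast (\ref{minavaged}) as an average-cost Markov decision process and then to show that an optimal stationary policy must have the threshold form of Definition \ref{monotonicthresholds}. First I would argue that the pair $(\Delta(t), E(t))$ is a sufficient state: because the energy arrival process is Poisson and hence memoryless, the conditional law of the future given $\mathcal{F}_t$ depends on the past only through the current age and battery level. Between energy arrivals $E(t)$ is constant and $\Delta(t)$ grows deterministically at unit rate, so the controlled process is piecewise-deterministic, and the only decision at a state $(\Delta, \ell)$ with $\ell \geq 1$ is whether to transmit now (which resets the state to $(0, \ell-1)$, modulo subsequent arrivals) or to wait; when $\ell = 0$, energy-causality forces waiting. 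Invoking the standard existence theory for average-cost MDPs with this regenerative, unichain structure, I would reduce the search in (\ref{minavaged}) to stationary deterministic Markov policies, i.e.\ policies specified by a transmit set $\mathcal{T}_\ell \subseteq [0,\infty)$ for each energy level $\ell$.

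Next I would establish the key structural claim: for each $\ell$ the optimal transmit set is up-closed in age, i.e.\ of the form $[\tau_\ell, \infty)$, which is precisely a threshold policy. The instantaneous cost rate is $\Delta(t)$, which is increasing in age, so the marginal penalty of deferring a transmission grows without bound as age accumulates, and this is the source of the threshold. To make this rigorous I would either (i) write the average-cost optimality equation with relative value function $h(\Delta, \ell)$, observe that the transmit value depends on age only through the reset target $(0,\ell-1)$ and is therefore constant in $\Delta$, whereas the wait value accrues the increasing running cost $\Delta - \bar{\Delta}^{*}$, and then show by a convexity and monotonicity argument that once transmitting is weakly preferred at some age it remains preferred at every larger age; or (ii) give a direct sample-path interchange argument, showing that if a candidate optimal policy waits at an age larger than one at which it transmits at the same energy level, then altering that decision yields a feasible online policy whose time-average age is no larger, using the convexity of $x \mapsto x^{2}$ in (\ref{avaged2}).

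I expect the main obstacle to be this second step, namely rigorously proving up-closedness of the transmit region in the presence of the coupled energy dynamics. The difficulty is that transmitting not only resets the age but also decrements the battery, so any local exchange or value-function comparison must control the downstream effect of having one less energy unit available and show that this coupling does not overturn the monotonicity induced by the convex age cost. A secondary technical point is justifying the restriction to stationary Markov policies for a continuous-state average-cost problem, which requires verifying the regularity conditions (finiteness of the optimal value and a well-defined relative value function) underlying the optimality equation; here I would rely on the finite battery capacity $B$ and the positive arrival rate $\mu_{H}$ to guarantee a proper regenerative structure with finite expected cycle length and finite per-cycle cost.
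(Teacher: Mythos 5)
Your proposal takes a genuinely different route from the paper, and its load-bearing step is left unverified. The paper never invokes average-cost MDP theory or an optimality equation: it fixes a finite horizon $h$, poses the choice of $Z_{k+1}$ as the optimal stopping problem (\ref{stoppingae}), solves it exactly via the Snell envelope (\ref{snellsw}) following \cite{peskir2006optimal}, and reads off the threshold structure from Lemma \ref{starcost} (the sufficiency of the state $(\Delta, E)$, which you also identified) together with the monotonicity of $J_{h}^{*}(a,\ell)$ in $a$ against the $a$-independent transmit value $J_{h}(\ell-1)$, yielding an age threshold $\rho_{h}(\ell)$ per energy level. It then passes to $h \rightarrow \infty$ by hand: uniform boundedness of $\rho_{h}$, the tail bound $\Pr(X_{k}\geq x) \in O(e^{-\mu_H x})$, and the interchange lemma (\ref{hlimit}), concluding that the limiting thresholds $\tau_{\ell}=\lim_{h\rightarrow\infty}\rho_{h}(\ell)$ solve (\ref{minavaged}). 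Your structural step (i) is sound and is in substance the same mechanism: at fixed $\ell$ the transmit value is constant in age while the wait value is nondecreasing in age (a coupling argument), so the transmit set is up-closed. Note also that the obstacle you single out as primary---the battery decrement corrupting the exchange argument---is not an obstacle for this theorem at all, since at fixed $\ell$ the reset target $(0,\ell-1)$ is the same whatever the age; it becomes the real difficulty only for Theorem \ref{existopmothreshold}, where the paper needs convexity of $J_{h}(\ell)$ in $\ell$.

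The genuine gap is precisely the point you classify as a ``secondary technical point'': justifying the restriction to stationary policies and the validity of the average-cost optimality equation with a relative value function. Here the state space is uncountable, the running cost $\Delta$ is unbounded, and time is continuous, so no off-the-shelf average-cost existence theorem applies without verifying substantive hypotheses (vanishing-discount bounds, or Sch\"al-type compactness and continuity conditions), and your proposal supplies no such verification; your fallback route (ii), a sample-path interchange, is also delicate because swapping a single decision alters the entire downstream battery trajectory, and you do not indicate how to control this. This foundational step is exactly what the paper's architecture is engineered to avoid: the finite-horizon stopping problem is solved exactly with no ergodic or stationarity theory, and the only asymptotic input is the explicit exponential tail bound that legitimizes the $h\rightarrow\infty$ interchange in (\ref{hlimit}). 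If you carried out a vanishing-discount or regenerative-cycle verification (finite expected cycle length and per-cycle cost do hold here, by the same tail bound the paper proves), your plan would close; as written, it rests on an unproved appeal to ``standard existence theory.''
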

In our search for an optimal policy, we can reduce the space of policies further,
\begin{definition}
\label{monotonicthresholds}
A threshold policy is said to be a monotone threshold policy if $\tau_B\leq \ldots\leq\tau_{\ell}\leq\tau_{1}$.
\end{definition}
Let $\Pi^{\rm{MT}}$ be the set of monotone threshold policies. The following is true: 
\begin{theorem}
\label{existopmothreshold}
There exists a monotone threshold policy $\pi \in \Pi^{\rm{MT}}$ that solves (\ref{minavaged}).
\end{theorem}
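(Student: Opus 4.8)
The plan is to begin from an age-optimal threshold policy, whose existence is guaranteed by Theorem~\ref{existopthreshold}, and to show that its thresholds can be rearranged into a profile that is non-increasing in the energy level without increasing $\bar\Delta$. The vehicle I would use is the observation that, under any policy in $\Pi^{T}$, the update epochs $Z_k$ are regeneration-like points and the post-update energy levels form a Markov chain, so that by the Markov-renewal-reward theorem $\bar\Delta=\frac{\mathbb{E}[X^2/2]}{\mathbb{E}[X]}$. Applying a Dinkelbach/parametric reduction with parameter $\beta:=\min_{\pi\in\Pi}\bar\Delta$, an optimal policy is one that minimizes the per-cycle cost $\mathbb{E}[\tfrac12 X^2-\beta X]$, and the problem becomes an average-cost semi-Markov decision process whose state $s\in\{0,\dots,B-1\}$ is the energy remaining immediately after an update. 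I would then write the Bellman equation for a relative value function $h$, namely $h(s)=\min\mathbb{E}[\tfrac12 X^2-\beta X+h(S')]$, where $S'$ is the post-update energy at the next update and $h(-1):=+\infty$ encodes energy-causality (no update from an empty battery).

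The key structural step uses that, within any inter-update interval, the energy is non-decreasing: only Poisson arrivals, and no consumption, occur until the next update. Hence the continuation is a pure-birth process and the optimal rule at energy $\ell$ reduces to an age threshold $\tau_\ell$ determined by the HJB. Writing $V(a,\ell)$ for the cost-to-go at age $a$ and energy $\ell$, one has $V(a,\ell)=\min\{h(\ell-1),\,C(a,\ell)\}$, where the continuation value $C$ solves $\partial_a V(a,\ell)+\mu_H\,(V(a,\ell+1)-V(a,\ell))+(a-\beta)=0$ in the waiting region, and $\tau_\ell$ is the first age at which $C(a,\ell)=h(\ell-1)$. Thus stopping at energy $\ell$ lands in post-update state $\ell-1$ at cost $h(\ell-1)$, so the stopping cost as a function of $\ell$ is precisely $h(\ell-1)$.

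To obtain $\tau_B\le\cdots\le\tau_1$, I would establish two properties of $h$ by coupling arguments: (i) $h$ is non-increasing, since a system started with more post-update energy can reproduce any policy of a system with less energy and never does worse; and (ii) $h$ has non-increasing differences (a discrete convexity), i.e. the marginal value of an extra unit of energy is non-increasing, which I would prove by an interchange argument over which of two coupled systems spends the marginal unit first. Given (i), stopping from a higher energy level lands in a cheaper state, so stopping is more attractive and occurs at a smaller age; invoking (ii) to control the $\ell$-dependence of the continuation term $\mu_H(V(a,\ell+1)-V(a,\ell))$ then forces the stopping boundary $\tau_\ell$ to be non-increasing in $\ell$. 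Since the rearranged policy still attains $\beta$, it solves (\ref{minavaged}) and lies in $\Pi^{\rm MT}$.

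The hard part will be (ii): propagating the discrete convexity of $h$ through the HJB to the stopping boundary, together with the two boundary cases, the empty-battery state $s=0$ (where an update is infeasible until the first arrival, so the constraint rather than an optimality condition fixes the behaviour) and the full-battery level $\ell=B$ (where further arrivals are lost, truncating the birth process at a reflecting boundary). I would treat these by verifying that the monotonicity/convexity coupling still closes at $\ell=B$ and by using $h(-1)=+\infty$ for $s=0$. Property (i) is comparatively routine; it is the convexity-to-threshold-monotonicity implication, and its interaction with the cap at $B$, where the real work lies.
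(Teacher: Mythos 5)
Your route is, at its core, the same as the paper's: reduce the problem to an optimal stopping/DP formulation in the state $(\Delta(t),E(t))$, and derive monotonicity of the stopping boundary from a discrete convexity property of the value function in the energy coordinate. The paper's scaffolding differs from yours: instead of a Dinkelbach transformation and an average-cost Bellman equation with relative value function $h(s)$, it reuses the finite-horizon cost $J_{h}(\ell)$ from the proof of Theorem \ref{existopthreshold} (cost-to-go over a window of length $h$ after an update that leaves energy $\ell$) and takes $h\to\infty$ at the end, so the thresholds $\rho_{h}(\ell)$ are already in hand and only their ordering must be shown. Its key lemma is that the gap $J_{h}(\ell)-J_{h}^{*}(a,\ell+1)$ between the stopping cost and the optimal value is non-increasing in $\ell$; since the zero set of this gap in $a$ is an up-set, this immediately yields $\rho_{h}(\ell-1)\geq\rho_{h}(\ell)$. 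Notably, the paper never uses your property (i) (monotonicity of the value function in energy): the entire argument runs on your property (ii) alone, so (i) is not merely ``routine'' but dispensable.

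The genuine gap in your proposal is that (ii) --- which you correctly flag as where the real work lies --- is left as an unexecuted ``interchange argument,'' and that step is essentially the whole theorem; everything else is framework. The paper completes it in two concrete moves you would need to reproduce. First, discrete convexity $J_{h}(\ell+1)\leq\frac{1}{2}\left(J_{h}(\ell)+J_{h}(\ell+2)\right)$, shown by comparing a deterministic extra energy unit against a mean-one randomization of it (inequality (\ref{Jhconvexity})). Second, propagation of convexity through the stopping operator: evaluating the level-$(\ell+1)$ stopping problem under the kernel $K^{*}$ that is optimal for level $\ell+2$ gives $J_{h}^{*}(a,\ell+1)-J_{h}^{*}(a,\ell+2)\leq J_{h}(\ell)-J_{h}(\ell+1)$, where the battery cap enters only through $\min\lbrace\ell+\sigma,B-1\rbrace$ and convexity dominates those truncated differences by $J_{h}(\ell)-J_{h}(\ell+1)$ --- this disposes of the reflecting boundary at $B$ that concerns you, with no separate case analysis, and the empty-battery case is handled by $J_{h}(-1)=\infty$ exactly as you propose. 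Beyond this, your scaffolding silently imports prerequisites the paper proves separately: the renewal-reward identity $\bar{\Delta}=\mathbb{E}[X^{2}/2]/\mathbb{E}[X]$ and ergodicity of the embedded energy chain require finite thresholds (supplied in the paper by the uniform bound $\rho_{max}$ and Lemma \ref{ergodicitylemma}), and the existence of a solution to your average-cost Bellman equation with unbounded (quadratic-in-age) cost is asserted, not verified. As written, then, the proposal is a correct outline of the right route, but it is not yet a proof: it must be completed precisely at the convexity-propagation lemma, which is the paper's central contribution for this theorem.
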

Theorem \ref{existopmothreshold} implies that in the optimal update policy, update packets are sent out more frequently when the battery level is high. 
\begin{figure}[htpb]
    \centering \includegraphics[scale=0.72]{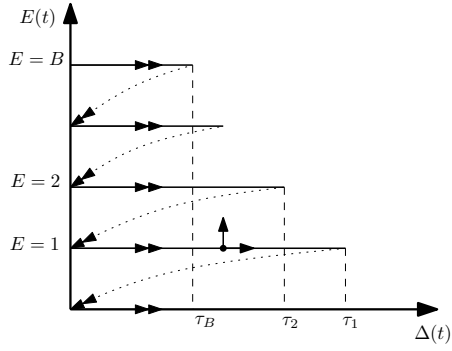}
\caption{An illustration of the state space and transitions for a policy in $\Pi^{\rm{MT}}$.}
\label{statespacew} 
\end{figure}

To understand the time evolution of $\Delta(t)$ and $E(t)$ for policies in $\Pi^{\rm{MT}}$, consider the illustration in Fig.  
\ref{statespacew}. It can be seen from Fig.  
\ref{statespacew} that when $Z_{k}^{+}=j$, the next update of a policy $\pi \in \Pi^{\rm{MT}}$ occurs before than some time $t' \in [Z_{k}+\tau_{m}, Z_{k}+\tau_{m-1}]$ if and only if there occur $m-j$ energy arrivals before than $t'$. Accordingly, for  policies in $\Pi^{\rm{MT}}$, the cumulative distribution function (CDF) of inter-update durations, $\Pr(X_{k+1}\leq x \mid E(Z_{k}^{+})=j)$ can be expressed  as:
\begin{eqnarray}
&\Pr(X_{k+1}\leq x \mid  E(Z_{k}^{+})=j)  = & \nonumber \\
&\begin{cases} 0, &\text{if~} x < \tau_{B} \\
\Pr(Y_{m-j}\leq x), &\text{if~}  \tau_{m}\leq x < \tau_{m-1}, \forall m \in \lbrace 2,...,B\rbrace \\
\Pr(Y_{1-j}\leq x),  &\text{if~} \tau_{1}\leq x 
\end{cases}&
\label{cdferlang}
\end{eqnarray}

where $Y_{i}$ obeys the Erlang distribution at rate $\mu_H$ with parameter $i$, for $i\geq 1$, and $Y_{i}=0$ for $i \leq 0$.  From (\ref{cdferlang}), an expression for the transition probability $\Pr(E(Z_{k+1}^{+})=i \mid  E(Z_{k}^{+})=j)$ for $i= 0,1, ...., B-1$ can be derived:
\begin{eqnarray}
&\Pr(E(Z_{k+1}^{+})=i \mid  E(Z_{k}^{+})=j)=& \nonumber \\
&\begin{cases} \Pr(Y_{B-j}\leq \tau_{B-1}), &\text{if~} i=B-1 \\
\Pr(Y_{1+i-j}\leq \tau_{i})- \Pr(Y_{2+i-j}\leq \tau_{i+1}), &\text{if~}  i <B-1 \\
\end{cases}&
\label{transitionerlang}
\end{eqnarray}

Hence, energy states sampled at update instants can be described as a DTMC with the transition probabilities in (\ref{transitionerlang}). When thresholds are finite, this DTMC is ergodic as any energy state is reachable from any other energy state with positive probability in $B-1$ steps. 

Next, we show the main structural result satisfied by the thresholds of any optimal policy in $\Pi^{\rm{MT}}$.
\begin{theorem}
\label{fixedpthreshold}
An optimal policy for solving (\ref{minavaged}) is a monotone threshold policy  $\pi^* = (\tau_1^*, \ldots, \tau_B^*)$ that satisfies the following property: The threshold $\tau^*_B$ for sending an update packet when the battery is full is equal to the minimum time-average expected age, i.e.,
\begin{equation}
\tau^*_B = \bar{\Delta}_{\pi^*} =  \min_{\pi\in\Pi} \bar{\Delta}_\pi. 
\end{equation}
\end{theorem}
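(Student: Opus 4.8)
The plan is to exploit a structural feature of the transition probabilities in (\ref{transitionerlang}): the full-battery threshold $\tau_B$ does not appear there at all. Consequently the stationary distribution $p_j=\Pr(E(Z_k^+)=j)$ of the energy DTMC is \emph{independent} of $\tau_B$, which decouples the optimization of $\tau_B$ from that of $\tau_1,\dots,\tau_{B-1}$. First I would record that, for any $\pi\in\Pi^{\mathrm{MT}}$ with finite thresholds, the chain is ergodic, so renewal--reward applied at the update epochs turns (\ref{avaged2}) into a ratio
\begin{equation}
\bar\Delta_\pi=\frac{N(\tau_B)}{D(\tau_B)},\quad N=\tfrac12\sum_j p_j\,\mathbb{E}[X^2\mid E(Z_k^+)=j],\quad D=\sum_j p_j\,\mathbb{E}[X\mid E(Z_k^+)=j],
\end{equation}
the boundary term $(t_f-Z_{N_U(t_f)})^2$ vanishing in the limit. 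Here only $N$ and $D$, and not the weights $p_j$, carry dependence on $\tau_B$.

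Next I would compute how $N$ and $D$ vary with $\tau_B$ through a sample-path (envelope) argument. Writing the inter-update time from state $j$ as $X=\min_{0\le m\le B-j}\max(\tau_{j+m},Y_m)$ with $Y_0=0$, which reproduces (\ref{cdferlang}), the only term containing $\tau_B$ is $\max(\tau_B,Y_{B-j})$. Hence on the admissible range $\tau_B\in[0,\tau_{B-1}]$ we have $\partial X/\partial\tau_B=\mathbbm{1}[Y_{B-j}<\tau_B]$, and on that event $X=\tau_B$ exactly. Taking expectations gives $\partial_{\tau_B}\mathbb{E}[X\mid j]=q_j$ and $\partial_{\tau_B}\mathbb{E}[X^2\mid j]=2\tau_B q_j$, where $q_j=\Pr(Y_{B-j}<\tau_B)$. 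Because $p_j$ is frozen, $D'(\tau_B)=\sum_j p_j q_j>0$ (the $j=B-1$ term alone is $p_{B-1}(1-e^{-\mu_H\tau_B})>0$ by ergodicity), and crucially $N'(\tau_B)=\tau_B\,D'(\tau_B)$.

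This last identity is the heart of the matter. Substituting it into the quotient rule yields the simple expression
\begin{equation}
\frac{d\bar\Delta_\pi}{d\tau_B}=\frac{D'(\tau_B)}{D(\tau_B)}\bigl(\tau_B-\bar\Delta_\pi\bigr),
\end{equation}
so $\bar\Delta_\pi$ strictly decreases while $\tau_B<\bar\Delta_\pi$ and strictly increases while $\tau_B>\bar\Delta_\pi$. Since $\bar\Delta_\pi>\tau_B$ at the left endpoint $\tau_B=0$, and since $\tau_B\mapsto\tau_B-\bar\Delta_\pi$ has derivative $1-(D'/D)(\tau_B-\bar\Delta_\pi)$, equal to $1$ at any of its zeros, there is a single upcrossing and hence a unique stationary point, namely the fixed point $\tau_B=\bar\Delta_\pi$, which is the minimizer. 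Applying this coordinatewise optimality at the global optimum $\pi^*$ (which lies in $\Pi^{\mathrm{MT}}$ by Theorem \ref{existopmothreshold}) then gives $\tau_B^*=\bar\Delta_{\pi^*}=\min_{\pi\in\Pi}\bar\Delta_\pi$.

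The step I expect to be the main obstacle is the interaction of this fixed point with the monotonicity constraint $\tau_B\le\tau_{B-1}$: I must guarantee that the minimizer over $[0,\tau_{B-1}^*]$ is the interior fixed point rather than the right endpoint $\tau_B=\tau_{B-1}^*$. Boundary pinning would force $\tau_B^*=\tau_{B-1}^*<\bar\Delta_{\pi^*}$, and to exclude it I would show that at a global optimum one must have $\bar\Delta_{\pi^*}\le\tau_{B-1}^*$; otherwise the derivative identity shows $\bar\Delta$ is still strictly decreasing at $\tau_{B-1}^*$, so a joint perturbation of $(\tau_{B-1},\tau_B)$, or a comparison with the unconstrained threshold optimum furnished by Theorems \ref{existopthreshold}--\ref{existopmothreshold}, would strictly reduce the average age, contradicting optimality. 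A secondary, more routine point is the rigorous justification of the envelope identity $\partial X/\partial\tau_B=\mathbbm{1}[Y_{B-j}<\tau_B]$ and of differentiating under the expectation at the atom $X=\tau_B$; this follows from the Erlang densities of the $Y_{B-j}$ together with dominated convergence.
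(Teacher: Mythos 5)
Your proposal is correct and follows the same overall route as the paper's own proof, resting on the identical three pillars: (i) the observation that the transition probabilities (\ref{transitionerlang}), and hence the stationary distribution $p_j$, do not involve $\tau_B$; (ii) the renewal--reward ratio expression for $\bar{\Delta}$ (the paper's Lemmas \ref{convage} and \ref{ergoupdates}); and (iii) the identity $\frac{\partial}{\partial \tau_B}\mathbb{E}[X^2 \mid j] = 2\tau_B \frac{\partial}{\partial \tau_B}\mathbb{E}[X \mid j]$, which is exactly the paper's Lemma \ref{diffmoments}. Your closing step is also the paper's step in disguise: your quotient-rule identity $\frac{d\bar{\Delta}}{d\tau_B}=\frac{D'(\tau_B)}{D(\tau_B)}\left(\tau_B-\bar{\Delta}\right)$ is algebraically the same statement as the paper's $\frac{\partial}{\partial \tau_B}D(\tau)=2\left(\tau_B-\bar{\Delta}_B^M\right)\frac{\partial \bar{X}}{\partial \tau_B}$ for its Dinkelbach-style auxiliary function $D(\tau)$. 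Where you genuinely differ is in the local arguments, mostly to your advantage: you prove (iii) via the sample-path representation $X=\min_m \max(\tau_{j+m},Y_m)$ and the envelope observation that $X=\tau_B$ exactly on $\lbrace Y_{B-j}<\tau_B\rbrace$ (valid for monotone thresholds, since every other branch candidate is at least $\tau_{j+m}\geq\tau_B$; it needs the convention $\tau_0=\infty$ and the remark that $\Pr(Y_{B-j}=\tau_B)=0$), which buys you the explicit derivatives $q_j=\Pr(Y_{B-j}<\tau_B)$ and, via the $j=B-1$ term, \emph{strict} positivity of $D'(\tau_B)$ for $\tau_B>0$; the paper instead manipulates CDF integrals in Lemma \ref{diffmoments} and only asserts $\frac{\partial \bar{X}}{\partial \tau_B}\geq 0$, leaving the degenerate flat case unexamined. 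You also explicitly confront the constraint $\tau_B\leq\tau_{B-1}$ and the possibility of boundary pinning, which the paper's sign-change argument silently assumes away: its claim that $D(\tau)\geq 0$ together with $D=0$ at the argmin forces $\tau_B=\bar{\Delta}_B^M$ tacitly presumes the optimum is interior, and one cannot simply continue the ratio formula past $\tau_B=\tau_{B-1}$ because (\ref{cdferlang}) was derived only for monotone vectors. Your exclusion of pinning remains a sketch (the joint perturbation of $(\tau_{B-1},\tau_B)$ would need the $\tau_{B-1}$-dependence of the $p_j$ worked out), but on this point your write-up identifies a real gap that the published proof does not address at all.
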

This follows from the following two results:
\begin{lemma}
\label{diffmoments}
Consider non-negative random variable $X$, if:
\begin{eqnarray*}
&\Pr(X\leq x)  = & \nonumber \\
&\begin{cases} 0 &\mbox{; } x < \tau_{B}, \\
F_{i}(x) & \mbox{; }  \tau_{i}\leq x < \tau_{i-1}, \forall i \in \lbrace 2,...,B \rbrace , \\
F_{1}(x)  & \mbox{; } \tau_{1}\leq x, 
\end{cases}&
\end{eqnarray*}
where $\tau_{B}\leq ... \leq \tau_{2} \leq \tau_{1}$ and $F_{i}(x)$ is the CDF of a non-negative random variable for every $i \in \lbrace 1,...,B \rbrace$, then:
\begin{equation*}
\frac{\partial}{\partial \tau_{i}}\mathbb {E}\left[ X^{2} \right]=2\tau_{i}\frac{\partial}{\partial \tau_{i}}\mathbb {E}\left[ X \right].
\end{equation*}
\end{lemma}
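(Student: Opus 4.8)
The plan is to write both moments as tail integrals and to exploit the fact that the thresholds enter only as \emph{limits} of integration, so that differentiating in $\tau_i$ produces only boundary terms, which turn out to match across the two moments up to a common factor $2\tau_i$.

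First I would use the standard tail representations for a non-negative random variable,
\[
\mathbb{E}[X] = \int_0^\infty \Pr(X > x)\, dx, \qquad \mathbb{E}[X^2] = \int_0^\infty 2x\,\Pr(X > x)\, dx,
\]
and substitute the piecewise form of $G(x) := \Pr(X \le x)$ given in the hypothesis, splitting each integral at the ordered breakpoints $\tau_B \le \cdots \le \tau_1$. Writing $\bar F_i := 1 - F_i$, this yields
\[
\mathbb{E}[X] = \tau_B + \sum_{i=2}^{B} \int_{\tau_i}^{\tau_{i-1}} \bar F_i(x)\, dx + \int_{\tau_1}^{\infty} \bar F_1(x)\, dx,
\]
together with the identical decomposition for $\mathbb{E}[X^2]$ in which every integrand is multiplied by $2x$ and the leading term $\tau_B$ is replaced by $\tau_B^2$.

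The crucial structural point is that the functions $F_i$ are fixed CDFs that do \emph{not} depend on the thresholds; the $\tau_i$ appear only as limits of integration. Hence, when I differentiate with respect to a single $\tau_i$ via the fundamental theorem of calculus, there is no ``differentiate-the-integrand'' contribution and only boundary terms survive. For an interior index $2 \le i \le B-1$, the threshold $\tau_i$ is the lower limit of the block indexed $i$ and the upper limit of the block indexed $i+1$, so
\[
\frac{\partial}{\partial \tau_i}\mathbb{E}[X] = -\bar F_i(\tau_i) + \bar F_{i+1}(\tau_i) = F_i(\tau_i) - F_{i+1}(\tau_i),
\]
while the identical bookkeeping for $\mathbb{E}[X^2]$ replaces each boundary term by its value times the integrand factor $2x$ evaluated at $x = \tau_i$:
\[
\frac{\partial}{\partial \tau_i}\mathbb{E}[X^2] = 2\tau_i\bigl(-\bar F_i(\tau_i) + \bar F_{i+1}(\tau_i)\bigr) = 2\tau_i\bigl(F_i(\tau_i) - F_{i+1}(\tau_i)\bigr).
\]
Comparing the two lines gives the claim for interior $i$.

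It then remains to check the two edge cases, which I expect to be the only place requiring care. For $i = B$, the extra leading term $\tau_B$ (respectively $\tau_B^2$) contributes $+1$ (respectively $+2\tau_B$), and combining with the lower-limit term of block $B$ gives $\partial_{\tau_B}\mathbb{E}[X] = F_B(\tau_B)$ and $\partial_{\tau_B}\mathbb{E}[X^2] = 2\tau_B F_B(\tau_B)$; for $i = 1$, the final integral $\int_{\tau_1}^{\infty}$ plays the role of block $i+1$ above, and the same computation produces the matched pair. In every case the second-moment derivative is exactly $2\tau_i$ times the first-moment derivative, precisely because all surviving boundary terms are evaluated at the single point $x = \tau_i$, where the two integrands differ only by the factor $2x$. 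I do not anticipate a serious obstacle: beyond the endpoint bookkeeping, the only requirement is the mild regularity (one-sided continuity of the $\bar F_i$ at the thresholds) under which the fundamental theorem of calculus applies, and it is worth emphasizing that the $F_i$ need not agree at the thresholds, since any jump of $G$ is irrelevant to the boundary-term computation.
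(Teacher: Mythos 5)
Your proof is correct and follows essentially the same route as the paper: both express the moments as tail integrals in which the thresholds appear only as integration limits, so that differentiating in $\tau_{i}$ leaves only boundary terms evaluated at $x=\tau_{i}$, where the two integrands differ exactly by the factor $2x$. The only difference is cosmetic: by invoking $\mathbb{E}[X^{2}]=\int_{0}^{\infty}2x\Pr(X>x)\,dx$ directly, you avoid the paper's substitution $\Pr(X^{2}\geq x)=\Pr(X\geq\sqrt{x})$ and its auxiliary antiderivatives $\tilde{F}_{i}^{I}$ and $Q_{i}$, arriving at the same identity $\frac{\partial}{\partial \tau_{i}}\mathbb{E}[X^{2}]=2\tau_{i}\bigl(F_{i}(\tau_{i})-F_{i+1}(\tau_{i})\bigr)=2\tau_{i}\frac{\partial}{\partial \tau_{i}}\mathbb{E}[X]$ with less bookkeeping.
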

\begin{corollary}
The inter-update intervals, $X$, for any $\pi \in \Pi^{\rm{MT}}$ satisfy the following: 
\begin{equation}
\frac{\partial}{\partial \tau_{i}}\mathbb {E}\left[ X^{2} \mid j\right]=2\tau_{i}\frac{\partial}{\partial \tau_{i}}\mathbb {E}\left[ X \mid j\right], \forall (i,j)\in \lbrace 1, 2,...,B\rbrace^{2}.
\end{equation}
\end{corollary}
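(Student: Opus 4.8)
The plan is to work directly from the tail (survival-function) representation of the moments of a non-negative random variable, namely $\mathbb{E}[X] = \int_0^\infty \bigl(1 - \Pr(X \le x)\bigr)\,dx$ and $\mathbb{E}[X^2] = \int_0^\infty 2x\,\bigl(1 - \Pr(X \le x)\bigr)\,dx$. The reason this form is the right starting point is that both moments share the \emph{same} integrand $1 - \Pr(X \le x)$ and differ only in the weight in front of it: the weight is $1$ for the first moment and $2x$ for the second. Since the claimed identity carries precisely a factor $2\tau_i$, I expect it to emerge simply from evaluating this weight at the point $x = \tau_i$.

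First I would split each tail integral along the thresholds, using the piecewise form of the CDF given in the statement:
\[
\mathbb{E}[X] = \int_0^{\tau_B} dx + \sum_{k=2}^B \int_{\tau_k}^{\tau_{k-1}}\!\!\bigl(1-F_k(x)\bigr)\,dx + \int_{\tau_1}^\infty \!\!\bigl(1-F_1(x)\bigr)\,dx,
\]
together with the analogous decomposition for $\mathbb{E}[X^2]$ obtained by inserting the extra weight $2x$ in each integral. The key structural fact is that the functions $F_k$ are fixed CDFs that do \emph{not} depend on the thresholds; the thresholds $\tau_\ell$ enter only as the limits of integration. This reduces the entire computation to the Leibniz rule for differentiating an integral with respect to its endpoints.

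Next I would differentiate with respect to a fixed $\tau_i$. For an interior index $2 \le i \le B-1$, the threshold $\tau_i$ occurs as the lower limit of the piece carrying $F_i$ and as the upper limit of the piece carrying $F_{i+1}$, while every other piece is unaffected. The Leibniz rule then yields $-(1-F_i(\tau_i)) + (1-F_{i+1}(\tau_i)) = F_i(\tau_i) - F_{i+1}(\tau_i)$ for the first moment, and $-2\tau_i(1-F_i(\tau_i)) + 2\tau_i(1-F_{i+1}(\tau_i)) = 2\tau_i\bigl(F_i(\tau_i)-F_{i+1}(\tau_i)\bigr)$ for the second. Comparing the two establishes $\tfrac{\partial}{\partial\tau_i}\mathbb{E}[X^2] = 2\tau_i\,\tfrac{\partial}{\partial\tau_i}\mathbb{E}[X]$. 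The two extreme cases follow from the same argument under the natural conventions: for $i=B$ the adjacent piece on the left is the one where the CDF is $0$ (so one reads $F_{B+1}\equiv 0$), and for $i=1$ the piece carrying $F_1$ is the unbounded tail $[\tau_1,\infty)$, whose lower endpoint is $\tau_1$.

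The computation is essentially routine; the only point demanding care is the legitimacy of the term-by-term differentiation and the treatment of each piece's boundary. Because the weights $1$ and $2x$ are continuous and the $F_k$ are held fixed, no extra contribution arises from any jump of the CDF at a threshold, so differentiating each integral with respect to its variable limit is valid. I therefore expect the main (and fairly minor) obstacle to be purely bookkeeping: tracking which pieces have $\tau_i$ as an endpoint and with which sign, and checking that the endpoint indices $i=1$ and $i=B$ conform to the same pattern. The conceptual heart is simply that the shared integrand forces the two derivatives to coincide up to the weight evaluated at $x=\tau_i$, that is, up to the factor $2\tau_i$.
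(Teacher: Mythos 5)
Your proposal is correct and takes essentially the same approach as the paper: the paper proves this corollary via Lemma \ref{diffmoments}, differentiating the threshold-split tail integrals of $\mathbb{E}[X]$ and $\mathbb{E}[X^{2}]$ and observing that only the boundary terms at $x=\tau_{i}$ survive because the piecewise CDFs (here the Erlang pieces of (\ref{cdferlang}), which are threshold-independent) enter only through the limits of integration. Your single deviation is cosmetic and arguably cleaner: you use $\mathbb{E}[X^{2}]=\int_{0}^{\infty}2x\Pr(X>x)\,dx$, which produces the factor $2\tau_{i}$ directly from the weight $2x$, whereas the paper differentiates $\int_{0}^{\infty}\Pr(X^{2}\geq x)\,dx$ through a change of variables with the auxiliary functions $\tilde{F}_{i}^{I}$ and $Q_{i}$ before reaching the same boundary-term identity.
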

Note that the transition probabilities (\ref{transitionerlang}) do not depend on $\tau_{B}$ hence the steady-state probabilities obtained from (\ref{transitionerlang}) also do not depend on $\tau_{B}$. This leads to a property of $\tau_{B}$ which is shown in Theorem \ref{fixedpthreshold}. 
The unit-battery case , i.e., $B=1$ case was solved in ~\cite{WuYang2017} and  ~\cite{TanISIT2017}, hence we skip the case $B=1$ and continue with the case $B=2$ 
where we can show the result below:
\begin{theorem}
\label{B2age}
When $B=2$, the average age $\bar{\Delta}$ can be expressed as:
\begin{eqnarray}
&\bar{\Delta}=&\nonumber \\
&\frac{\!\frac{\alpha_{2}^{2}}{2}\!+\! e^{-\alpha_{2}}\![\alpha_{2}+1+\rho_{1}(\alpha_{2}^{2}+2\alpha_{2}+2)]\! -e^{-\alpha_{1}}\![\alpha_{1}+1+\rho_{1}(\alpha_{1}^{2}+\alpha_{1}+1)]}{\mu_{H}\left( \alpha_{2}+e^{-\alpha_{2}}[1+\rho_{1}(\alpha_{2}+1)]-e^{-\alpha_{1}}[1+\rho_{1}\alpha_{1}]\right) }&
\label{eq:B2age}
\end{eqnarray}
where
\[
\rho_{1}=\frac{e^{-\alpha_{1}}}{1-e^{-\alpha_{1}}\alpha_{1}},
\]
and
\[
\alpha_{1}=\mu_{H}\tau_{1}, \alpha_{2}=\mu_{H}\tau_{2}.
\]

\end{theorem}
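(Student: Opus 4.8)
The plan is to exploit the renewal structure of the age process under a monotone threshold policy and reduce $\bar{\Delta}$ to the first two moments of a single inter-update interval. Since the energy states sampled at update instants form an ergodic DTMC (by the discussion following (\ref{transitionerlang})), the age sawtooth regenerates in distribution at the update epochs, and the boundary term $(t_f - Z_{N_U(t_f)})^2$ in (\ref{avaged2}) is negligible in the Ces\`aro limit. A Markov-renewal-reward argument, using $\int_0^{X}\Delta(t)\,dt = X^2/2$ over one inter-update interval, then gives $\bar{\Delta} = \mathbb{E}[X^2]/(2\mathbb{E}[X])$, where $\mathbb{E}[X^r] = \sum_{j} p_j\,\mathbb{E}[X^r \mid j]$ are the stationary moments, $p_j$ is the stationary probability of sampling energy state $j$ at an update, and $j \in \{0,1\}$ when $B = 2$.

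First I would compute the stationary distribution. Specializing (\ref{transitionerlang}) to $B=2$ yields a two-state chain on $\{0,1\}$ with $\Pr(0\to 0) = e^{-\alpha_1}(1+\alpha_1)$ and $\Pr(1\to 0) = e^{-\alpha_1}$. Solving $p_0 = p_0\Pr(0\to 0) + p_1\Pr(1\to 0)$ with $p_0+p_1 = 1$ gives exactly $p_0 = \rho_1 = e^{-\alpha_1}/(1-\alpha_1 e^{-\alpha_1})$ and $p_1 = 1-\rho_1$; this is where the quantity $\rho_1$ in the statement originates. As noted after (\ref{transitionerlang}), these probabilities do not involve $\tau_2$ (equivalently $\alpha_2$), consistent with Theorem \ref{fixedpthreshold}.

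Next I would extract the conditional moments directly from the law (\ref{cdferlang}) with $B=2$. The point to handle with care is that each conditional distribution of $X$ is a mixture of atoms and absolutely continuous pieces: conditioned on $j$ there is a point mass at $\tau_2$ (the update fires at the full-battery threshold), an Erlang density on $(\tau_2,\tau_1)$, a point mass at $\tau_1$, and---only for $j=0$---an exponential tail on $(\tau_1,\infty)$ corresponding to the event that the first energy arrival itself occurs past $\tau_1$. Integrating $x$ and $x^2$ against each piece, via the substitution $u = \mu_H x$ and $\int_a^b u^k e^{-u}\,du$ for $k=1,2$, produces closed forms such as $\mu_H\mathbb{E}[X\mid 0] = \alpha_2 + (\alpha_2+2)e^{-\alpha_2} - (\alpha_1+1)e^{-\alpha_1}$ and $\mu_H\mathbb{E}[X\mid 1] = \alpha_2 + e^{-\alpha_2} - e^{-\alpha_1}$, together with the analogous (one degree higher) expressions for the second moments.

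Finally I would form $\mathbb{E}[X] = \rho_1\mathbb{E}[X\mid 0] + (1-\rho_1)\mathbb{E}[X\mid 1]$ and the corresponding second moment, then substitute into $\bar{\Delta} = \mathbb{E}[X^2]/(2\mathbb{E}[X])$. Collecting the $e^{-\alpha_1}$ and $e^{-\alpha_2}$ coefficients, the $\rho_1\alpha_2$ cross-terms cancel and $2\mu_H\mathbb{E}[X]$ collapses to $2\bigl(\alpha_2 + e^{-\alpha_2}[1+\rho_1(\alpha_2+1)] - e^{-\alpha_1}[1+\rho_1\alpha_1]\bigr)$, which is precisely the denominator in (\ref{eq:B2age}); the numerator follows identically from $\tfrac{\mu_H^2}{2}\mathbb{E}[X^2]$. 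I expect the main obstacle to be purely bookkeeping: correctly accounting for the atoms in (\ref{cdferlang}) when forming the moments (it is easy to drop the mass $\alpha_1 e^{-\alpha_1}$ at $\tau_1$ for $j=0$), and then carrying the lengthy but mechanical cancellations through to the stated closed form. No new inequality or existence argument is needed beyond the ergodicity already established for (\ref{transitionerlang}).
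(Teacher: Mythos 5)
Your proposal is correct and takes essentially the same route as the paper's own proof: the paper likewise invokes the renewal-reward identity $\bar{\Delta}=\mathbb{E}[X^{2}]/(2\,\mathbb{E}[X])$ via Lemmas \ref{convage} and \ref{ergoupdates}, solves the two-state stationary equation obtained from (\ref{transitionerlang}) to get the weight $\rho_{1}=e^{-\alpha_{1}}/(1-\alpha_{1}e^{-\alpha_{1}})$, and reads the conditional moments off (\ref{cdferlang}); your intermediate expressions, e.g.\ $\mu_{H}\mathbb{E}[X\mid 0]=\alpha_{2}+(\alpha_{2}+2)e^{-\alpha_{2}}-(\alpha_{1}+1)e^{-\alpha_{1}}$ and $\mu_{H}\mathbb{E}[X\mid 1]=\alpha_{2}+e^{-\alpha_{2}}-e^{-\alpha_{1}}$, recombine exactly to (\ref{eq:B2age}), and your assignment of the weight $\rho_{1}$ to the post-update state $j=0$ is the one consistent with the final formula. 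The only blemish is a normalization slip in your last step: with the numerator written as $\tfrac{\mu_{H}^{2}}{2}\mathbb{E}[X^{2}]$, the matching denominator is $\mu_{H}^{2}\mathbb{E}[X]=\mu_{H}\bigl(\alpha_{2}+e^{-\alpha_{2}}[1+\rho_{1}(\alpha_{2}+1)]-e^{-\alpha_{1}}[1+\rho_{1}\alpha_{1}]\bigr)$ rather than $2\mu_{H}\mathbb{E}[X]$ --- harmless, since the bracketed quantity you computed is correct.
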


\section{NUMERICAL RESULTS}
For battery sizes $B=1,2,3,4,5$, the policies in $\Pi^{\rm{MT}}$ are numerically optimized giving AoI versus energy arrival rate (Poisson) curves in Fig \ref{AgeEnergy}.
\begin{figure}[htpb]
\centering
  \begin{psfrags}
    \psfrag{Y}[t]{$\bar{\Delta}$}
    \psfrag{X}[b]{$\mu_{H}$}
    \psfrag{a}[l]{$B=1$}
    \psfrag{b}[l]{$B=2$}
    \psfrag{c}[l]{$B=3$}
    \psfrag{d}[l]{$B=4$}
    \psfrag{e}[l]{$B=5$}
    \psfrag{f}[l]{$B=\infty$}
   \includegraphics[scale=0.19]{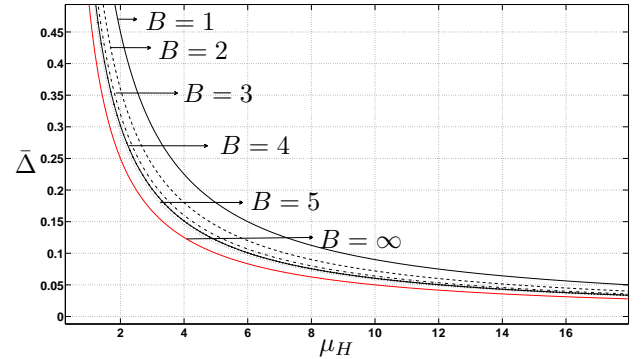}
    \end{psfrags}
\caption{AoI versus energy arrival rate (Poisson) for different battery sizes $B=1,2,3,4,5$.}
\label{AgeEnergy} 
\end{figure}

\begin{table}[]
\centering
\caption{Optimal thresholds for different battery sizes for $\mu=1$}
\label{my-label}
\begin{tabular}{|l|l|l|l|l|l|l|}
\hline
    & $\tau_{1}$ & $\tau_{2}$     & $\tau_{3}$     & $\tau_{4}$      & $\tau_{5}$  &$\bar{\Delta}_{\pi^*} $    \\ \hline
$B=1$ & 0.90         & - &  -    &   -    &   -    & 0.90 \\ \hline    
$B=2$ & 1.5         & 0.72 &  -    &   -    &   -    & 0.72 \\ \hline
$B=3$ & 1.5         & 1.2  & 0.64 &   -    &    -   & 0.64\\ \hline
$B=4$ & 1.5         & 1.2  & 0.96 & 0.604 &   -    & 0.604\\ \hline
$B=5$ & 1.5         & 1.2  & 0.96 & 0.9   & 0.582  & 0.582\\ \hline
\end{tabular}
\end{table}

\vspace{-0.1 in}
\section{CONCLUSION}
This paper explored the age-energy tradeoff for status updates sent by a finite-battery source that is charged intermittently by Poisson energy arrivals. The objective was to design a policy for the source to send updates to minimizing average status age using the given energy harvests, known and used in an online manner. A threshold policy is one that transmits when age exceeds a particular threshold for any battery state. It is shown that there is an online energy-causal threshold policy with monotone thresholds that optimally solves the problem. In particular, the smallest of the thresholds, the one used when the battery is full, has a value that matches the optimal average age.
\vspace{-0.1 in}
\bibliography{AgeOfInformation}

\appendix
\subsection{The Proof of Theorem \ref{existopthreshold}}
Consider the problem in below for some $h<\infty$:
\begin{equation}
\label{kfinite}
\displaystyle\min_{\pi\in\Pi} \mathbb {E}\left[ \displaystyle\int_{Z_{k}}^{Z_{k+1}+h}\Delta(t)  dt \middle| Z_{k}=z,\mathcal{F}_{z}\right].
\end{equation}
In order to solve (\ref{kfinite}), let us define a cost function $J_{h;w,\mathcal{F}_{w}}$ for some time $w\geq z$ which is defined as:
\begin{align}
&J_{h;w,\mathcal{F}_{w}}^{*}:=&\nonumber\\
&\displaystyle\min_{\pi\in\Pi}\mathbb {E}\left[ \displaystyle\int_{w}^{Z_{k+1}+h}\Delta(t)  dt \middle|Z_{k+1} \geq w, \mathcal{F}_{w} \right], N_U(w^{-})=k.& 
\end{align}

This represents the minimum cumulative age in $[w,w+h]$ that is achievable by online policies given  $\mathcal{F}_{w}$. In fact,

\begin{lemma}
\label{starcost}
The cost function $J_{h;w,\mathcal{F}_{w}}^{*}$ depends only on $\Delta(w)$ and $E(w)$ , i.e., $J_{h;w,\mathcal{F}_{w}}^{*}=J_{h;w',\mathcal{F}_{w'}}^{*}$ if and only if $\Delta(w)=\Delta(w')$ and $E(w)=E(w')$.
\end{lemma}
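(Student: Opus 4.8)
The plan is to establish that, once we condition on $\mathcal{F}_w$, the residual control problem defining $J^*_{h;w,\mathcal{F}_w}$ depends on the history only through the current age $\Delta(w)$ and the current energy level $E(w)$. This pair is the natural Markov state of the system, and the claim is precisely that $J^*$ is a deterministic function of that state. Since both $\Delta(w)$ and $E(w)$ are $\mathcal{F}_w$-measurable, the forward (``if'') direction is the substantive content, and I would prove it as follows.

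First I would reparametrize all future decisions relative to $w$. Writing $s=t-w$ and $\tilde N(s):=N_H(w+s)-N_H(w)$, and letting $T_1=Z_{k+1}-w,\ T_2=Z_{k+2}-Z_{k+1},\dots$ denote the future inter-update delays, I would express the age trajectory on $[w,Z_{k+1}+h]$ as $\Delta(w+s)=\Delta(w)+s$ on $[0,T_1)$ with resets to zero at each subsequent update, and the battery trajectory via \eqref{energyavailability} as $E(w+s)=\min\{E(w)+\tilde N(s),B\}$ before the first update, with a unit decrement at each update thereafter. The key observation is that the integrand $\Delta(\cdot)$, the upper limit $Z_{k+1}+h=w+T_1+h$, and the energy-causality constraints $E(Z_j)\ge 1$ can all be written as one fixed deterministic functional $\Phi\big(\Delta(w),E(w);\tilde N,(T_j)\big)$ of the initial state, the future arrival process, and the chosen delays, with no further dependence on $\mathcal{F}_w$.

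Second, I would invoke the memoryless (strong Markov) property of the Poisson arrival process: conditioned on $\mathcal{F}_w$, the shifted process $\tilde N$ is again a rate-$\mu_H$ Poisson process and is \emph{independent} of $\mathcal{F}_w$. Consequently, for any online policy the conditional expectation reduces to $\mathbb{E}_{\tilde N}\big[\Phi(\Delta(w),E(w);\tilde N,(T_j))\big]$, and the online/energy-causality constraints restrict $(T_j)$ to be adapted to the filtration generated by $\tilde N$. Because the conditional law of $\tilde N$ does not depend on the particular realization of $\mathcal{F}_w$, the feasible set of decision rules and the objective are identical for any two conditioning events sharing the same $(\Delta(w),E(w))$. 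Taking the infimum then yields the same value, i.e. $J^*_{h;w,\mathcal{F}_w}=J^*_{h;w',\mathcal{F}_{w'}}$ whenever $\Delta(w)=\Delta(w')$ and $E(w)=E(w')$, which is the desired conclusion.

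The step I expect to be the main obstacle is the admissibility reduction in the second part: one must argue carefully that allowing the delays $(T_j)$ to depend on $\mathcal{F}_w$ in addition to $\tilde N$ cannot lower the conditional cost below what rules adapted to $\tilde N$ alone already achieve. This is exactly where independence of $\tilde N$ from $\mathcal{F}_w$ enters: for almost every fixed realization of $\mathcal{F}_w$ consistent with a given state $(\Delta(w),E(w))$, the inner optimization is an identical control problem driven by the same Poisson law, so no extra information in $\mathcal{F}_w$ is useful and the conditional optima coincide. I read the essential assertion as this forward implication, namely that $J^*$ is a well-defined function of $(\Delta(w),E(w))$, since this is what the subsequent dynamic-programming characterization of the optimal thresholds relies on; the reverse implication is the injective (monotone) dependence of the value on the state, which I would obtain from the monotonicity of $\Phi$ in its two state arguments and treat as a secondary point.
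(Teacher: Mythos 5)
Your proof is correct and follows essentially the same route as the paper's: the paper's (much terser) argument rests on exactly your two facts — that the cumulative age over $[w,Z_{k+1}+h]$ is determined by $\Delta(w)$ together with the future update instants alone, and that the battery evolution is determined by $E(w)$ together with future updates and energy arrivals, whose Poisson law is identical over any such interval — and you have merely made the time shift, the independence of the shifted arrival process from $\mathcal{F}_w$, and the admissibility reduction explicit. One remark: like the paper's own proof, you genuinely establish only the ``if'' direction (that $J^*_{h;w,\mathcal{F}_w}$ is a well-defined function of the state $(\Delta(w),E(w))$), which is all the subsequent dynamic-programming argument uses; your proposed monotonicity route to the stated ``only if'' would yield at best weak monotonicity rather than injectivity of the value in the state, but since the paper does not prove that direction either, this does not put you at a disadvantage relative to the original argument.
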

\begin{proof}
This is due to the following facts that , for any $Z_{k+1}\geq w$, (1) given $\Delta(w)$ the cumulative age in $[w,Z_{k+1}+h]$, i.e. $\int_{w}^{Z_{k+1}+h}\Delta(t)  dt $ depends only the information on  the updates in $[w,Z_{k+1}+h]$ and (2) given $E(w)$, the evolution of the battery state in $[w,Z_{k+1}+h]$ is determined only by the updates and energy arrivals in $[w,Z_{k+1}+h]$ and the distribution of energy arrivals is identical for any $[w,Z_{k+1}+h]$.
\end{proof}

Hence, we can use the notation $J_{h}^{*}(a,\ell):=J_{h;w,\mathcal{F}_{w}}^{*}$ where $a=\Delta(w^+)$ and $\ell=E(w^+)$. Now, considering the case $Z_{k+1}=w$, define the following function:
\begin{align}
&J_{h}:=&\nonumber\\
&\displaystyle\min_{\pi\in\Pi}\mathbb {E}\left[ \displaystyle\int_{w}^{Z_{k+1}+h}\Delta(t)  dt \middle|Z_{k+1} = w, \mathcal{F}_{w} \right], N_U(w^{-})=k.& 
\end{align}

Notice that $\Delta(w^+)=0$ as $w=Z_{k+1}$, accordingly, by Lemma \ref{starcost}, $J_{h}$ is only a function of $E(w^+)$, i.e., $J_{h}=J_{h}(\ell)$ where $\ell=E(w^+)$. Notice also,
\begin{equation}
\mathbb {E}\left[ \displaystyle\int_{Z_{k+1}}^{Z_{k+1}+h}\Delta(t)  dt \middle| Z_{k+1},\mathcal{F}_{Z_{k+1}}\right]\geq J_{h}(E(Z_{k+1})-1),
\end{equation}
for any $Z_{k+1}$ of a policy $\pi \in \Pi$ and the equality is achieved for the policies that solve (\ref{kfinite}).

Accordingly, $Z_{k+1}$ of the policy solving (\ref{kfinite}) is the optimal stopping time of the following stopping problem for a given $z$, $h$ and $\mathcal{F}_{z}$:
\begin{equation}
\label{stoppingae}
\displaystyle\max_{w \in \mathfrak{M}_{z}} \mathbb {E}\left[ G_{w} \mid Z_{k}=z,\mathcal{F}_{z}\right],
\end{equation}
where $\mathfrak{M}_{z}$ is the family of stopping times such that 
$\mathfrak{M}_{z}=\left\lbrace w \geq z : \left\lbrace w \leq t\right\rbrace \in \mathcal{F}_{t}, \forall t \geq z\right\rbrace $  and $G=(G_{t})_{t \geq z}$ is a stochastic process having the following definition:
\begin{align}
\label{gainalt}
&G_{t}=&\nonumber\\
&-\displaystyle\min_{\pi\in\Pi}\mathbb {E}\left[ \displaystyle\int_{z}^{Z_{k+1}+h}\Delta(t)  dt \middle|Z_{k+1}=t, E(t)\right].& 
\end{align}
or alternatively,
\begin{equation}
G_{t}=-\frac{1}{2}(t-z)^{2}-J_{h}(E(t)-1),
\end{equation}
where $J_{h}(-1):=\infty$.

If exists, the optimal stopping time $w^{*}$ for (\ref{stoppingae}) is given by the following stopping rule  \cite[Theorem 2.2.]{peskir2006optimal}:
\begin{equation}
\label{rulesnell}
w^{*} = \inf \lbrace w \geq z : G_{w} = S_{w} \rbrace,
\end{equation}
where $S$ is the Snell envelope \cite{peskir2006optimal} for $G$:
\begin{equation}
\label{snellsw}
S_{w}= \esssup_{w' \in \mathfrak{M}_{w}}  \mathbb {E}\left[ G_{w'} \mid\mathcal{F}_{w}\right].
\end{equation}
Notice that the Snell envelope can be  written by substituting (\ref{gainalt}) in (\ref{snellsw})  as follows:
\begin{align}
&S_{w}=&\nonumber\\
&\esssup_{w' \in \mathfrak{M}_{w}}\left( -\displaystyle\min_{\pi\in\Pi}\mathbb {E}\left[ \displaystyle\int_{z}^{Z_{k+1}+h}\Delta(t)  dt \middle|Z_{k+1}=w', \mathcal{F}_{w}\right]\right) .& 
\end{align}
hence,
\begin{align}
&S_{w}=&\nonumber\\
&-\displaystyle\min_{\pi\in\Pi}\mathbb {E}\left[ \displaystyle\int_{z}^{Z_{k+1}+h}\Delta(t)  dt \middle|Z_{k+1}\geq w, \mathcal{F}_{w}\right].& 
\end{align}

Accordingly, using the definition of $J_{h}^{*}(a,\ell)$, we can write:
\begin{equation}
S_{w}=-\frac{1}{2}(w-z)^{2}-J_{h}^{*}(w-z,E(w)).
\end{equation}

Therefore, the optimal stopping rule in (\ref{rulesnell}) is equivalent to:
\begin{equation}
\label{ruleage}
w^{*} = \inf \lbrace w \geq z : J_{h}(E(w)-1) = J_{h}^{*}(\Delta(w),E(w))\rbrace,
\end{equation}
Next, we show that the stopping rule in (\ref{ruleage}) is a threshold rule in age. In order to show this, let us define the function $\rho_{h}(\cdot):\lbrace -1, 0,1,...,B\rbrace \rightarrow [0,\infty)$ such that:
\[
\rho_{h}(\ell)=\inf\lbrace a \geq 0 : J_{h}(\ell-1) = J_{h}^{*}(a,\ell)\rbrace.
\]
Consider $J_{h}^{*}(a',\ell)$ for some $a'\geq \rho_{h}(\ell)$ which is larger than or equal to  $J_{h}^{*}(a,\ell)$ as  $J_{h}^{*}(a,\ell)$ is non-decreasing in $a$. On the other hand, $J_{h}^{*}(a',\ell)$ is smaller than or equal to $J_{h}(\ell-1)$ for any $a \geq 0$ as:
\begin{align*}
&J_{h}^{*}(a',\ell) &\\
&\leq\displaystyle\min_{\pi\in\Pi}\mathbb {E}\left[ \displaystyle\int_{w}^{Z_{k+1}+h}\Delta(t)  dt \middle| Z_{k+1}=w, E(w)=\ell \right]&\\
&=J_{h}(\ell-1),&
\end{align*}
where the inequality is true as the expectation is conditioned on policies with $Z_{k+1}=w$.

Accordingly, $J_{h}^{*}(a',\ell)=J_{h}(\ell-1)$ for any $\ell \in \lbrace 0, 1,2,.., B\rbrace$ and $a' \geq \rho_{h}(\ell)$. Therefore, the stopping rule in (\ref{ruleage}) is equivalent to:
\begin{equation}
\label{ruleaged}
w^{*} = \inf \lbrace w \geq z : \Delta(w)\geq \rho_{h}(E(w))\rbrace,
\end{equation}
for $\ell \in \lbrace 0, 1,2,.., B\rbrace$.

We showed that the stopping rule in (\ref{ruleaged}) gives the optimal stopping time $w^{*}$ which equals to $Z_{k+1}$ of a policy solving  (\ref{kfinite}) for any finite $h$. Now, we show that the optimal stopping rule with the same structure also gives a solution to (\ref{minavaged}). 

First, consider:
\begin{lemma}
The function $\rho_{h}(\ell)$ is uniformly bounded such that:
\begin{equation}
\rho_{max}=\sup_{h \geq 0 ,\ell>0} \rho_{h}(\ell).
\end{equation}
\end{lemma}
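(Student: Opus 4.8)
The plan is to prove the quantitative bound $\rho_h(\ell)\le \mu_H D_h$ for every $\ell\ge 1$, where $D_h:=J_h(0)-J_h(B-1)$ is the ``value of a full battery'' at age $0$, and then to show that $\sup_h D_h<\infty$. Since $\rho_h(\ell)=\inf\{a\ge 0: J_h(\ell-1)=J_h^*(a,\ell)\}$ is exactly the smallest age at which updating immediately is optimal, establishing these two facts yields $\rho_{\max}\le \mu_H\sup_h D_h<\infty$, and the case $\ell=0$ is excluded by the stated supremum over $\ell>0$.

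First I would compare, at a generic state with age $a=\Delta(w)$ and energy $\ell\ge 1$, the option of updating immediately (value $J_h(\ell-1)$) against an \emph{arbitrary} waiting policy whose first update occurs at a stopping time $\sigma>0$, at which point the energy is $E_\sigma=\min\{\ell+N,B\}$ with $N$ the number of energy arrivals during the wait. Using that $J_h$ is non-increasing in its energy argument (a routine coupling), the waiting value is $\mathbb{E}\!\left[\int_0^\sigma(a+s)\,ds\right]+\mathbb{E}[J_h(E_\sigma-1)]$. Two elementary estimates control the trade-off: the waiting cost obeys $\mathbb{E}\!\left[\int_0^\sigma(a+s)\,ds\right]\ge a\,\mathbb{E}[\sigma]$, while the energy benefit $\mathbb{E}[J_h(\ell-1)-J_h(E_\sigma-1)]$ is nonzero only when $N\ge 1$ and is then at most $J_h(\ell-1)-J_h(B-1)\le D_h$; hence it is bounded by $\Pr(N\ge 1)\,D_h\le \mathbb{E}[N]\,D_h=\mu_H\,\mathbb{E}[\sigma]\,D_h$ (Poisson mean via optional stopping). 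Subtracting, the waiting value minus $J_h(\ell-1)$ is at least $\mathbb{E}[\sigma]\,(a-\mu_H D_h)$, which is strictly positive once $a>\mu_H D_h$. Thus updating is strictly optimal beyond age $\mu_H D_h$, giving $\rho_h(\ell)\le \mu_H D_h$ for all $\ell\ge 1$ and all $h$.

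It then remains to bound $D_h$ uniformly in $h$. I would telescope $D_h=\sum_{k=0}^{B-2}\big(J_h(k)-J_h(k+1)\big)$ and bound each marginal value of one stored unit at age $0$ by a coupling: starting from energy $k$ rather than $k+1$, let the deficient system reproduce the optimal policy of the richer one but insert a single catch-up wait for one extra arrival \emph{immediately after an update}, i.e.\ when its age is $\approx 0$. Inserting a delay of length $W$ at age $0$ changes the cumulative age by at most $\tfrac12 W^2$, a boundary effect that does \emph{not} grow with the horizon (the tail of the schedule is merely shifted), so $J_h(k)-J_h(k+1)\le \tfrac12\,\mathbb{E}[W^2]=1/\mu_H^2$ with $W\sim\mathrm{Exp}(\mu_H)$. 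Summing gives $D_h\le (B-1)/\mu_H^2$, and therefore $\rho_{\max}\le \mu_H D_{\max}\le (B-1)/\mu_H<\infty$.

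The delicate point, and the reason a naive dynamic-programming balance fails, is that the marginal value of energy at a \emph{large} age $a$ is itself of order $a/\mu_H$ (a single update kills the entire accumulated age), so balancing the age-rate $a$ against $\mu_H$ times that marginal value is tautological. The step that breaks this circularity is gating the energy benefit by the arrival \emph{probability} $\le \mu_H\mathbb{E}[\sigma]$ rather than by the age-dependent marginal value, so that the common factor $\mathbb{E}[\sigma]$ cancels and the resulting age bound $\mu_H D_h$ is independent of $\sigma$ and of the (unknown) thresholds. Making the horizon-independence of $D_h$ fully rigorous, namely tracking the energy bookkeeping during the catch-up coupling and the horizon-boundary terms so that no quantity scales with $h$ or with the thresholds, is the main technical obstacle; the remaining estimates are routine.
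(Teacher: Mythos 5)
You should know at the outset that the paper itself offers no proof of this lemma: it is stated bare in the appendix and immediately used to derive the exponential tail of $X_k$. So there is no paper argument to match against, and your proposal has to stand on its own. Its first half does: the reduction $\rho_h(\ell)\le \mu_H D_h$ with $D_h=J_h(0)-J_h(B-1)$ is correct within the paper's own decomposition of $J_h^*$ (the analogue of equation (\ref{Jhal})), and the key move --- gating the energy benefit of waiting by $\Pr(N\ge 1)\le \mathbb{E}[N]=\mu_H\mathbb{E}[\sigma]$ so that $\mathbb{E}[\sigma]$ cancels against the waiting cost $a\,\mathbb{E}[\sigma]$ --- is exactly what breaks the circularity you correctly diagnose in the naive balance. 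Modulo the routine monotonicity of $J_h$ in the energy argument, this step is sound, and it correctly isolates the remaining content of the lemma as $\sup_{h}D_h<\infty$.

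The second half, however, has a genuine gap, and it sits precisely where you park ``the main technical obstacle.'' The coupling bound $J_h(k)-J_h(k+1)\le \tfrac12\mathbb{E}[W^2]=1/\mu_H^2$ does not survive scrutiny. First, the accounting is wrong: if the deficient system inserts a wait of length $W$ after an update and replays the richer system's schedule shifted by $W$, one inter-update interval is lengthened from $X$ to $X+W$, costing $\mathbb{E}[XW]+\tfrac12\mathbb{E}[W^2]$, not $\tfrac12\mathbb{E}[W^2]$; the cross term $\mathbb{E}[XW]$ is of the order of the (unknown) thresholds. The only way to avoid the cross term is to update again at the end of the wait, which spends exactly the energy unit whose value you are trying to bound. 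Second, ``a single catch-up wait'' is unjustified: both systems see the same Poisson arrivals, so waiting for one arrival does not erase the one-unit deficit (the arrival credits the richer system too); the deficit disappears only at an overflow event when the richer battery is at capacity $B$. Until then the deficit can bind repeatedly, and each forced catch-up occurs at a moment when the common age equals a threshold-crossing age, not age $\approx 0$, costing roughly (age at insertion)$\times\mathbb{E}[W]$; the only available bound on that age is your own Step-1 bound $\mu_H D_h$, which makes the estimate circular ($D_h\lesssim (B-1)D_h$, with no contraction). Third, the paper's own numerics falsify the claimed constant: your two steps combine to $\rho_{max}\le (B-1)/\mu_H$, but the paper's Theorem \ref{existopmothreshold} proof identifies the optimal thresholds as $\tau_\ell=\lim_{h\to\infty}\rho_h(\ell)$, and Table I gives $\tau_1=1.5$ for $B=2$, $\mu_H=1$, so $\rho_{max}\ge 1.5>(B-1)/\mu_H=1$. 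Hence the per-unit estimate $1/\mu_H^2$ cannot be correct, and what your proposal actually establishes is the (valuable) reduction of the lemma to the uniform-in-$h$ boundedness of the finite-horizon relative value $D_h$ --- which remains unproven, both in your proposal and in the paper.
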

We will show that this lemma implies $\Pr(X_{k}\geq x) \in O(e^{-\mu_H x})$ when $Z_{k+1}=w^{*}$. This follows from the fact that, when $Z_{k+1}=w^{*}$, $X_{k}\geq \rho_{max}+x_{d}$ for some $x_{d}>0$ is possible if and only if, for some time $t$, $E(t)=0$ and $\Delta(t)=\rho_{max}+x_{d}$,  which occurs when there is no energy arrival during $\rho_{max}+x_{d}$ units of time. This also shows that $\Pr(Z_{k+1}< \infty \mid Z_{k}=z)=1$.  

Next, we show that:
\begin{lemma}
When $Z_{k+1}=w^{*}$ ,
\begin{align}
\label{hlimit}
&\lim_{h\rightarrow \infty}\frac{1}{h}\mathbb {E}\left[ \displaystyle\int_{z}^{Z_{k+1}+h}\Delta(t)  dt \middle| Z_{k}=z,\mathcal{F}_{z}\right]&\nonumber\\
&= \lim_{h\rightarrow \infty}\frac{1}{h}\mathbb {E}\left[ \displaystyle\int_{z}^{h}\Delta(t)  dt \middle| Z_{k}=z,\mathcal{F}_{z}\right],& 
\end{align}
\end{lemma}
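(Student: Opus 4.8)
The plan is to prove that the two limits coincide by showing that the \emph{difference} of the two normalized integrals vanishes as $h\to\infty$. Conditionally on $Z_k=z,\mathcal{F}_z$ and for $h>z$, split $\int_z^{Z_{k+1}+h}=\int_z^{h}+\int_h^{Z_{k+1}+h}$, so that the difference of the two quantities in \eqref{hlimit} equals $\frac1h\,\mathbb{E}[\int_h^{Z_{k+1}+h}\Delta(t)\,dt\mid Z_k=z,\mathcal{F}_z]$. It therefore suffices to show that $\mathbb{E}[\int_h^{Z_{k+1}+h}\Delta(t)\,dt\mid Z_k=z,\mathcal{F}_z]$ is bounded by a constant independent of $h$; dividing by $h$ then drives the difference to $0$, so that whenever either limit in \eqref{hlimit} exists the other does too and they agree.

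First I would control the integrand pathwise. Since an update only resets the age to zero, the age can grow at most linearly between $h$ and $t$, giving $\Delta(t)\le \Delta(h)+(t-h)$ for every $t\ge h$ (equality when no update falls in $(h,t]$, and $\Delta(t)=t-Z_m\le t-h$ otherwise). The window $[h,Z_{k+1}+h]$ has length $Z_{k+1}$, a finite random variable for fixed $z$, so integrating this bound yields $\int_h^{Z_{k+1}+h}\Delta(t)\,dt\le \Delta(h)\,Z_{k+1}+\tfrac12 Z_{k+1}^{2}$. Taking expectations and applying Cauchy--Schwarz gives $\mathbb{E}[\int_h^{Z_{k+1}+h}\Delta(t)\,dt]\le \sqrt{\mathbb{E}[\Delta(h)^{2}]\,\mathbb{E}[Z_{k+1}^{2}]}+\tfrac12\mathbb{E}[Z_{k+1}^{2}]$, so the whole problem reduces to uniform-in-$h$ second-moment bounds on the age $\Delta(h)$ and on the update time $Z_{k+1}$.

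Both moment bounds follow from the exponential tail already established for the optimal stopping rule $Z_{k+1}=w^{*}$. Because $X_{k+1}=Z_{k+1}-z$ satisfies $\Pr(X_{k+1}\ge x)\in O(e^{-\mu_H x})$, the variable $Z_{k+1}=z+X_{k+1}$ has finite moments of all orders for fixed $z$, so $\mathbb{E}[Z_{k+1}^{2}]<\infty$. The same tail argument applies to the backward age at any fixed time: the event $\{\Delta(h)\ge x\}$ forces an empty battery together with age exceeding the uniform bound $\rho_{max}$, which requires no energy arrival over roughly $x-\rho_{max}$ units of time and hence has probability $O(e^{-\mu_H x})$ \emph{uniformly} in $h$. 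This yields $\sup_{h}\mathbb{E}[\Delta(h)^{2}]<\infty$, and the two bounds combine to give $\mathbb{E}[\int_h^{Z_{k+1}+h}\Delta(t)\,dt]\le K$ for a constant $K$ independent of $h$.

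The hard part will be precisely this uniform-in-$h$ control of the age moments: one must verify that the age process under the threshold rule $w^{*}$ does not build up over time, i.e.\ that its second moment is bounded uniformly in $h$ rather than merely finite for each $h$. This is exactly where the boundedness of $\rho_{max}$ and the Poisson structure of the energy arrivals are essential, since together they guarantee a stationary-type exponential tail for $\Delta(h)$ that does not degrade as $h$ grows. Once this uniform bound is secured, the estimate $\frac1h\,\mathbb{E}[\int_h^{Z_{k+1}+h}\Delta(t)\,dt]\le K/h\to 0$ is immediate, and \eqref{hlimit} follows.
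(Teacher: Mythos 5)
Your overall strategy---rewrite the gap between the two quantities as $\frac{1}{h}\mathbb{E}\bigl[\int_h^{Z_{k+1}+h}\Delta(t)\,dt \mid Z_k=z,\mathcal{F}_z\bigr]$ and kill it with uniform moment bounds---is a genuinely different route from the paper's, but it has a real gap exactly at the step you yourself flag as the hard part: the uniform bound $\sup_h \mathbb{E}[\Delta(h)^2]<\infty$. Your tail argument for $\{\Delta(h)\geq x\}$ presumes that at the deterministic time $h$ the system is still governed by a threshold rule with thresholds bounded by $\rho_{max}$. But $\rho_{max}=\sup_{h\geq 0,\ell>0}\rho_h(\ell)$ bounds only the stopping rule for the \emph{first} update $Z_{k+1}=w^*$ in problem (\ref{kfinite}). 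Since $\Pr(Z_{k+1}>h)\to 0$, the age $\Delta(h)$ is, for large $h$, almost surely determined by the updates $Z_{k+2},Z_{k+3},\ldots$ inside the window $[Z_{k+1},Z_{k+1}+h]$, and these solve a \emph{different} problem: minimizing cumulative age up to the fixed endpoint $Z_{k+1}+h$, not up to ``next update plus $h$.'' The paper never characterizes this continuation; its optimal rule is time-inhomogeneous (the update decision near the end of the window depends on the remaining time), and no bound of the form $\rho_{max}$ is available for it. Without the claimed uniform exponential tail, the best pathwise information ($\Delta(h)\leq h-z$) leaves your difference term of order $\mathbb{E}[Z_{k+1}]$---bounded, but not vanishing---so the reduction stalls. (A smaller point: when the two limits in \eqref{hlimit} are infinite, e.g.\ under a degenerate continuation, they are trivially equal while the difference need not vanish, so your reformulation is strictly stronger than the stated claim.)

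The paper's proof is engineered precisely to avoid needing any property of the continuation. The direction ($\geq$) is the pathwise inclusion $[z,h]\subseteq[z,Z_{k+1}+h]$ with $\Delta\geq 0$. For ($\leq$) it splits on $\{Z_{k+1}<\alpha h\}$: on that event $\int_z^{Z_{k+1}+h}\Delta(t)\,dt\leq\int_z^{(1+\alpha)h}\Delta(t)\,dt$ pathwise, while on the complement it uses the universal bound $\int_{Z_k}^{Z_{k+1}+h}\Delta(t)\,dt\leq\tfrac12\bigl(X^2+h^2\bigr)$---valid because the age is reset to zero at $Z_{k+1}$, so the cumulative age over any window of length $h$ after it is at most $h^2/2$ \emph{regardless} of what the continuation policy does. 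The exponential tail from $\rho_{max}$ is then invoked only for the first stopping time, where it legitimately applies, and letting $\alpha\to 0$ closes the argument. To rescue your route you would need an additional structural lemma giving a uniform-in-$h$ threshold-type bound on the optimal continuation inside the fixed window; the paper neither proves such a lemma nor needs it.
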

\begin{proof}
 This equality can be shown considering: (i) the case ($\leq$) and (ii) the case ($\geq$):

(i) For the case ($\leq$), consider:
\begin{align*}
&\mathbb {E}\left[ \displaystyle\int_{Z_{k}}^{Z_{k+1}+h}\!\!\Delta(t)  dt \middle|Z_{k},\mathcal{F}_{Z_{k}}\right] &\\
&=\mathbb {E}\left[ \displaystyle\int_{Z_{k}}^{Z_{k+1}+h}\!\!\Delta(t)  dt \middle| Z_{k},\mathcal{F}_{Z_{k}},Z_{k+1}< h_{\alpha} \right]\!\Pr(Z_{k+1}< h_{\alpha})&\\
&+\mathbb {E}\left[ \displaystyle\int_{Z_{k}}^{Z_{k+1}+h}\!\!\Delta(t)  dt \middle| Z_{k},\mathcal{F}_{Z_{k}},Z_{k+1}\geq h_{\alpha} \right]\!\Pr(Z_{k+1}\geq h_{\alpha})&\\
&\leq\mathbb {E}\left[ \displaystyle\int_{Z_{k}}^{h(\alpha+1)}\!\!\Delta(t)  dt \middle|Z_{k},\mathcal{F}_{Z_{k}},Z_{k+1}<h_{\alpha} \right]\!\Pr(Z_{k+1}< h_{\alpha})&\\
&+\mathbb {E}\left[ \displaystyle\int_{Z_{k}}^{Z_{k+1}+h}\!\!\Delta(t)  dt \middle| Z_{k},\mathcal{F}_{Z_{k}},Z_{k+1}\geq h_{\alpha} \right]\!\Pr(Z_{k+1}\geq h_{\alpha}),&
\end{align*}
where $h_{\alpha}=\alpha h$ for some $\alpha \in (0,1)$.

The term for the condition $Z_{k+1}\geq h_{\alpha}$ vanishes as $h\rightarrow\infty$, in order to see this consider:
\begin{align*}
&\mathbb {E}\left[ \displaystyle\int_{Z_{k}}^{Z_{k+1}+h}\Delta(t)  dt  \middle| Z_{k},\mathcal{F}_{Z_{k}},Z_{k+1}\geq h_{\alpha} \right]\Pr(Z_{k+1}\geq h_{\alpha})&\\
&\leq\frac{1}{2}\left( \mathbb {E}\left[ X_{k}^{2} \mid X_{k}\geq h_{\alpha}- z\right]+h^{2}\right) \Pr(X_{k}\geq h_{\alpha}-z).&
\end{align*}
For $Z_{k+1}=w^{*}$, the upper bound goes to zero when $h\rightarrow\infty$ as $\Pr(X_{k}\geq h_{\alpha}-z) \in O(e^{-\mu_H h_{\alpha}})$ and consequently $ \mathbb {E}\left[ X_{k}^{2} \mid X_{k}\geq h_{\alpha}- z\right] \in O(h_{\alpha}^{2})$.

Accordingly,

\begin{align}
\label{upperboundforkfinitealpha}
&\lim_{h\rightarrow \infty}\frac{1}{h}\mathbb {E}\left[ \displaystyle\int_{z}^{Z_{k+1}+h}\Delta(t)  dt \middle| Z_{k}=z,\mathcal{F}_{z}\right]&\nonumber\\
&\leq \lim_{h\rightarrow \infty}\frac{1}{h}\mathbb {E}\left[ \displaystyle\int_{z}^{h(\alpha+1)}\Delta(t)  dt \middle| Z_{k}=z,\mathcal{F}_{z}\right], \alpha \in (0,1).& 
\end{align}

As the inequality is true for any $\alpha \in (0,1)$:
\begin{align}
\label{upperboundforkfinitealphaop}
&\lim_{h\rightarrow \infty}\frac{1}{h}\mathbb {E}\left[ \displaystyle\int_{z}^{Z_{k+1}+h}\Delta(t)  dt \middle| Z_{k}=z,\mathcal{F}_{z}\right]&\nonumber\\
&\leq \lim_{h\rightarrow \infty}\frac{1}{h}\mathbb {E}\left[ \displaystyle\int_{z}^{h}\Delta(t)  dt \mid Z_{k}=z,\mathcal{F}_{z}\right].& 
\end{align}
(ii) For the case ($\geq$), it can be seen that:
\begin{align*}
\label{lowerboundforkfinite}
&\lim_{h\rightarrow \infty}\frac{1}{h}\mathbb {E}\left[ \displaystyle\int_{z}^{Z_{k+1}+h}\Delta(t)  dt \middle| Z_{k}=z,\mathcal{F}_{z}\right]&\nonumber\\
&\geq \lim_{h\rightarrow \infty}\frac{1}{h}\mathbb {E}\left[ \displaystyle\int_{z}^{h}\Delta(t)  dt \middle| Z_{k}=z,\mathcal{F}_{z}\right].& 
\end{align*}
\end{proof}
Therefore, (\ref{hlimit}) is true and by (\ref{hlimit}) and (\ref{ruleaged}), a solution to the following problem,
\begin{equation}
\label{zproblem}
\displaystyle\min_{\pi\in\Pi}\lim_{h\rightarrow \infty}\frac{1}{h}\mathbb {E}\left[ \displaystyle\int_{z}^{h}\Delta(t)  dt \middle| Z_{k}=z,\mathcal{F}_{z}\right],
\end{equation} 
satisfies:
\begin{equation}
\label{ruleagehinf}
Z_{k+1} = \inf \lbrace w \geq z : \Delta(w)\geq \rho(E(w))\rbrace,
\end{equation}
where $\rho(\ell) := \lim_{h\rightarrow \infty} \rho_{h}(\ell)$.

Notice that:
\begin{align*}
&\mathbb {E}\left[ \displaystyle\int_{0}^{t_{f}}\Delta(t)  dt \right]&\\
&=\mathbb {E}\left[ \mathbb {E}\left[ \displaystyle\int_{0}^{Z_{k}}\Delta(t)  dt\middle|Z_{k},\mathcal{F}_{Z_{k}}\right] +\mathbb {E}\left[ \int_{Z_{k}}^{t_{f}}\Delta(t)\middle|Z_{k},\mathcal{F}_{Z_{k}}\right]\right].&
\end{align*}
 Therefore, conditioned on $Z_{k}$ and $\mathcal{F}_{Z_{k}}$, minimizing $\bar{\Delta}=\lim\sup_{t_{f} \rightarrow \infty}\frac{1}{t_{f}}\mathbb {E}\left[ \displaystyle\int_{0}^{t_{f}}\Delta(t)  dt \right]$ corresponds to solving (\ref{zproblem}), which means  a threshold policy solves (\ref{minavaged}) as (\ref{ruleagehinf}) (where $\rho(\ell)=\tau_{\ell}$) is satisfied by a solution to (\ref{zproblem}).

\subsection{The Proof of Theorem \ref{existopmothreshold}}
Theorem \ref{existopmothreshold} follows from the proof of Theorem \ref{existopthreshold} and the following lemma:
\begin{lemma}
$J_{h}(\ell)-J_{h}^{*}(a,\ell+1)$ is non-increasing in $\ell \in \lbrace 0,1,...,B-1\rbrace$ for any $a\geq0$ and $h \geq 0$.
\end{lemma}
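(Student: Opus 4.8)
The plan is to exploit the special way the initial age $a$ enters $J_h^*$, reduce the lemma to a single submodularity statement, and then prove that statement. First I would record the stopping recursion behind $J_h^*$ obtained from the proof of Theorem~\ref{existopthreshold}: writing $\sigma$ for the time until the next update and $E_\sigma$ for the capped Poisson battery at that instant (started from $\ell$),
\begin{equation*}
J_h^*(a,\ell)=\min_{\sigma\ge 0}\mathbb{E}_\ell\!\left[\,a\,\sigma+\tfrac12\sigma^2+J_h(E_\sigma-1)\right],
\end{equation*}
because $\int_0^\sigma(a+s)\,ds=a\sigma+\tfrac12\sigma^2$ and, crucially, the update resets the age so that the continuation $J_h(E_\sigma-1)$ is independent of $a$. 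Hence for each fixed stopping rule the objective is affine in $a$ with slope $\mathbb{E}_\ell[\sigma]$; minimizing over $\sigma$, the map $J_h^*(\cdot,\ell)$ is concave and $\partial_a J_h^*(a,\ell)=\mathbb{E}_\ell[\sigma^*_a]$ equals the optimal expected time to the first update from battery $\ell$.

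Next I would reduce the lemma to one monotonicity. After canceling terms, the assertion that $J_h(\ell)-J_h^*(a,\ell+1)$ is non-increasing in $\ell$ is exactly
\begin{equation*}
J_h(\ell+1)-J_h(\ell)\le J_h^*(a,\ell+2)-J_h^*(a,\ell+1),
\end{equation*}
so it suffices to show that the right-hand difference is non-increasing in $a$ and tends, as $a\to\infty$, to the left-hand side. The limit is immediate: by the bounded-threshold lemma, once $a\ge\rho_{\max}$ an immediate update is optimal, giving $J_h^*(a,\ell')=J_h(\ell'-1)$, so the right-hand difference equals $J_h(\ell+1)-J_h(\ell)$ exactly; a non-increasing function converging to this value stays above it for every finite $a$. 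Everything therefore reduces to the submodularity claim (C): for each $\ell$, the difference $J_h^*(a,\ell+1)-J_h^*(a,\ell)$ is non-increasing in $a$.

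By the envelope identity from the first paragraph, (C) is equivalent to $\mathbb{E}_{\ell+1}[\sigma^*_a]\le\mathbb{E}_\ell[\sigma^*_a]$: the optimal expected time to the first update is non-increasing in the starting battery. I would prove this by coupling the two starting states to a common energy-arrival path and showing that the extra unit of energy lets the richer system meet or anticipate each update of the poorer one, so that its optimal first-update time is stochastically no larger. Alternatively, I would prove (C) jointly with the discrete convexity of $J_h$ in $\ell$ by backward induction on a time-discretized horizon, verifying that the Bellman operator---age drift, expectation over capped Poisson arrivals, and the pointwise minimum with the $\ell$-dependent obstacle $J_h(\ell-1)$---preserves decreasing differences in $(a,\ell)$, and then passing to the continuous-time limit.

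The main obstacle is exactly this step. Two points need care. First, avoiding circularity: ``update sooner with more energy'' is close in spirit to the monotone-threshold conclusion the lemma is meant to support, so the coupling must be built directly from the stopping problem rather than assumed. Second, the cap at $B$: a surplus unit in the richer system can overflow and be lost, so the coupling has to discard the extra energy without disturbing the ordering of the update times. In the induction route the same boundary $\ell=B$ is where one must check that the capped-Poisson expectation and the min-with-obstacle do not destroy the decreasing-difference property, the interior steps being routine.
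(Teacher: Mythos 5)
Your reduction is correct as far as it goes: rewriting the lemma as $J_{h}(\ell+1)-J_{h}(\ell)\leq J_{h}^{*}(a,\ell+2)-J_{h}^{*}(a,\ell+1)$ is exact, and the large-$a$ identification $J_{h}^{*}(a,\ell')=J_{h}(\ell'-1)$ for $a\geq\rho_{h}(\ell')$ is indeed established in the paper's proof of Theorem \ref{existopthreshold}. But the entire burden of your argument then rests on claim (C), the submodularity statement that $J_{h}^{*}(a,\ell+2)-J_{h}^{*}(a,\ell+1)$ is non-increasing in $a$, and you do not prove it; you offer two sketches and correctly diagnose that each hits a serious obstruction. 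The coupling route is circular in precisely the way you fear: a pathwise or stochastic ordering of the optimal first-update times across battery levels would follow from knowing that the optimal age-thresholds $\rho_{h}(\ell)$ are non-increasing in $\ell$, but that monotonicity is exactly what this lemma is invoked to prove (via $0=J_{h}(\ell-1)-J_{h}^{*}(\rho_{h}(\ell),\ell)\leq J_{h}(\ell-2)-J_{h}^{*}(\rho_{h}(\ell),\ell-1)$). Without it, the optimal rules at energies $\ell+1$ and $\ell+2$ are incomparable functions of the age and no coupling presents itself. The backward-induction route (propagating decreasing differences in $(a,\ell)$ through a discretized Bellman operator, handling the cap at $B$ and the $\ell$-dependent obstacle $J_{h}(\ell-1)$, then passing to the continuous-time limit) is a plausible but entirely unexecuted program, and each of those verifications is nontrivial. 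Note also that (C) is strictly stronger than what the lemma needs: you only require the difference to stay above its large-$a$ value, not that it be monotone in $a$.

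The paper proves exactly that weaker inequality directly, for each fixed $a$, with no monotonicity in $a$ at all, and this is where your plan should be repaired. It writes $J_{h}^{*}(a,\ell+1)$ and $J_{h}^{*}(a,\ell+2)$ over a common stopping kernel as in (\ref{Jhal}), inserts the optimizer $K^{*}$ of the $(\ell+2)$-problem as a feasible (generally suboptimal) rule for the $(\ell+1)$-problem, and observes that the age-cost term $(z'-w)(a+\frac{z'-w}{2})$ is identical in the two objectives and cancels. What remains is an expectation of $J_{h}(\min\lbrace\ell+\sigma,B-1\rbrace)-J_{h}(\min\lbrace\ell+1+\sigma,B-1\rbrace)$, which is bounded by $J_{h}(\ell)-J_{h}(\ell+1)$ via the discrete convexity (\ref{Jhconvexity}) of $J_{h}$ in $\ell$, obtained from the randomization argument with $\mathbb{E}[V]=1$; this is a one-dimensional property of $J_{h}$, far easier than joint decreasing differences of $J_{h}^{*}$ in $(a,\ell)$. (The boundary effect of the cap that worries you also appears in the paper's step (\ref{convJh}), where it is absorbed by convexity together with the easy fact that $J_{h}$ is non-increasing in $\ell$.) In short: your affine-in-$a$/envelope observation and the reduction are fine, but the proof is missing its core step, and the same-policy comparison plus convexity of $J_{h}$ is the device that closes the gap without circularity.
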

\begin{proof}
First, consider the problem in below for some $\ell \in \lbrace 0,1,...,B-1\rbrace$ and and $h \geq 0$:
\begin{align}
\label{energycondition}
&\min_{V} \mathbb {E}_{V}\left[\min_{\pi \in \Pi} \mathbb {E}\left[ \displaystyle\int_{Z_{k+1}}^{Z_{k+1}+h}\Delta(t)  dt \middle| Z_{k+1}, E(Z_{k+1})=\ell+V\right]\right]&\\
&\mbox{s.t. } \mathbb {E}[V]= 1.&
\end{align}
where $V$ is a discrete r.v. that takes values in $\lbrace 0,1, ..., B-\ell-1\rbrace$.

The problem in (\ref{energycondition}) is solved when $\Pr(V=1)=1$ as this case maximizes the prior knowledge on $V$. Accordingly, considering the case $\Pr(V=0)=\Pr(V=2)=\frac{1}{2}$ which is suboptimal in (\ref{energycondition}), it can be seen that $J_h(\ell)$ constitutes  a convex series in $\ell$:
\begin{equation}
\label{Jhconvexity}
J_h(\ell+1)\leq \frac{1}{2}(J_h(\ell)+J_h(\ell+2)),
\end{equation}
for any  $\ell \in \lbrace 0,1,...,B-1\rbrace$ and $h \geq 0$.

Now, consider the alternative formulation of $J_{h}^{*}(a,\ell+1)$ in below:
\begin{align}
\label{Jhal}
&J_{h}^{*}(a,\ell+1)=\min_{Z_{k+1} \in \mathfrak{M}_{w}} \sum_{\sigma=0}^{\infty}\int_{w}^{\infty} K(w',\sigma)\times& \nonumber\\ 
& \left[ (z'-w)(a+\frac{z'-w}{2})+J_{h}(\min\lbrace\ell+\sigma,B-1\rbrace)\right]dz',&
\end{align}
where $K(w',\sigma)=\Pr(Z_{k+1}=z', N_H(z')-N_H(w)=\sigma )$.

Similarly,
\begin{align}
\label{decreasedJhal}
&J_{h}^{*}(a,\ell+2)=\min_{Z_{k+1} \in \mathfrak{M}_{w}} \sum_{\sigma=0}^{\infty}\int_{w}^{\infty}K(w',\sigma)\times& \nonumber\\ 
& \left[ (z'-w)(a+\frac{z'-w}{2})+J_{h}(\min\lbrace\ell+1+\sigma,B-1\rbrace)\right]dz'.&
\end{align}
Now, let $K^{*}(w',\sigma)$ be the distribution corresponding to the update time $Z_{k+1} \in \mathfrak{M}_{w}$ that is optimal in (\ref{decreasedJhal}), which means:
\begin{align}
\label{decreasedJhalop}
&J_{h}^{*}(a,\ell+2)=\sum_{\sigma=0}^{\infty}\int_{w}^{\infty}K^{*}(w',\sigma)\times& \nonumber\\ 
&\left[ (z'-w)(a+\frac{z'-w}{2})+J_{h}(\min\lbrace\ell+1+\sigma,B-1\rbrace)\right]dz'.& 
\end{align}
Combining (\ref{decreasedJhalop}) and (\ref{Jhal}) gives:
\begin{align}
\label{decreasedJhalop}
&J_{h}^{*}(a,\ell+1)- J_{h}^{*}(a,\ell+2)\leq \sum_{\sigma=0}^{\infty}\int_{w}^{\infty}K^{*}(w',\sigma)\times& \nonumber\\ 
&[J_{h}(\min\lbrace\ell+\sigma,B-1\rbrace)\!-\!J_{h}(\min\lbrace\ell+1+\sigma,B-1\rbrace)]dz'.&
\end{align}
As $\sigma \geq 0$, the  below inequality holds due to (\ref{Jhconvexity}):
\begin{align}
\label{convJh}
&J_{h}(\min\lbrace\ell+\sigma,B-1\rbrace)-J_{h}(\min\lbrace\ell+1+\sigma,B-1\rbrace)\leq & \nonumber\\ 
&J_h(\ell)-J_h(\ell+1).&
\end{align}
Hence,
\begin{align}
\label{convJhal}
J_{h}^{*}(a,\ell+1)- J_{h}^{*}(a,\ell+2)\leq  
J_h(\ell)-J_h(\ell+1),
\end{align}
which means:
\begin{align}
\label{nondJhminJha}
J_h(\ell+1)- J_{h}^{*}(a,\ell+2)\leq  
J_h(\ell)-J_{h}^{*}(a,\ell+1).
\end{align}
\end{proof}
\vspace{-0.2in}
This lemma shows that $\rho_{h}(\ell)$ is non-increasing in $\ell$ for any $h \geq 0$ as:
\[
0=J_{h}(\ell-1)-J_{h}^{*}(\rho_{h}(\ell),\ell)\leq J_{h}(\ell-2)-J_{h}^{*}(\rho_{h}(\ell),\ell-1),
\]
hence $\rho_{h}(\ell-1)\geq \rho_{h}(\ell)$.

As $\tau_{\ell}=\lim_{h\rightarrow \infty} \rho_{h}(\ell)$ for an optimal policy and $\rho_{h}(\ell)$ is non-increasing, thus a policy with $\tau_B\leq \ldots\leq\tau_{\ell}\leq\tau_{1}$ solves (\ref{minavaged}).


\subsection{The proof of Lemma \ref{diffmoments}}
Taking $\tau_{B+1}=0$ and $\tau_{0}=\infty$, consider:
\begin{align*}
\frac{\partial}{\partial \tau_{i}}\mathbb {E}\left[ X^{2}\right]&=\frac{\partial}{\partial \tau_{i}} \int_{0}^{\infty}\Pr(X^{2}\geq x)dx\\
&=\frac{\partial}{\partial \tau_{i}}\displaystyle\sum_{i=0}^{B}\int_{\tau_{i+1}^{2}}^{\tau_{i}^{2}}\Pr(X\geq \sqrt{x})dx\\
&=\frac{\partial}{\partial \tau_{i}}[\int_{\tau_{i+1}^{2}}^{\tau_{i}^{2}}\Pr(X\geq \sqrt{x})dx\\
&+\int_{\tau_{i}^{2}}^{\tau_{i-1}^{2}}\Pr(X\geq \sqrt{x})dx],
\end{align*}
for any $i=0,1,...,B$.

Similarly,
\begin{align*}
\frac{\partial}{\partial \tau_{i}}\mathbb {E}\left[ X\right]&=\frac{\partial}{\partial \tau_{i}}[\int_{\tau_{i+1}}^{\tau_{i}}\Pr(X\geq x)dx\\
&+\int_{\tau_{i}}^{\tau_{i-1}}\Pr(X\geq x)dx].
\end{align*}
for $i=0,1,...,B$.

Let $\tilde{F}_{i}(x)=1-F_{i}(x)$ and $\tilde{F}_{i}^{I}(x)=\int_{0}^{x} \tilde{F}_{i}(x')dx'$.  Then,
\begin{align*}
&\int_{\tau_{i+1}^{2}}^{\tau_{i}^{2}}\Pr(X\geq \sqrt{x})dx+\int_{\tau_{i}^{2}}^{\tau_{i-1}^{2}}\Pr(X\geq \sqrt{x})dx=\\
&=2\tau_{i}\tilde{F}_{i}^{I}(\tau_{i})-Q_{i}(\tau_{i}^{2})-2\tau_{i+1}\tilde{F}_{i}^{I}(\tau_{i+1})+Q_{i}(\tau_{i+1}^{2})+\\
&+2\tau_{i-1}\tilde{F}_{i-1}^{I}(\tau_{i-1})-Q_{i-1}(\tau_{i-1}^{2})-2\tau_{i}\tilde{F}_{i-1}^{I}(\tau_{i})+Q_{i-1}(\tau_{i}^{2}),
\end{align*}
where $Q_{i}(x)=\int_{0}^{x} \frac{\tilde{F}_{i}^{I}(\sqrt{x'})}{\sqrt{x'}}dx$.

Accordingly,
\begin{align*}
\frac{\partial}{\partial \tau_{i}}\mathbb {E}\left[ X^{2}\right]&=2\tau_{i}\tilde{F}_{i}(\tau_{i})+2\tilde{F}_{i}^{I}(\tau_{i})-\frac{\tilde{F}_{i}^{I}(\tau_{i})}{\tau_{i}}(2\tau_{i})\\
&-2\tau_{i}\tilde{F}_{i-1}(\tau_{i})-2\tilde{F}_{i-1}^{I}(\tau_{i})+\frac{\tilde{F}_{i-1}^{I}(\tau_{i})}{\tau_{i}}(2\tau_{i})\\
&=2\tau_{i}(\tilde{F}_{i}(\tau_{i})-\tilde{F}_{i-1}(\tau_{i}))\\
&=2\tau_{i}\frac{\partial}{\partial \tau_{i}}\mathbb {E}\left[ X\right].
\end{align*}
for $i=0,1,...,B$.
\begin{lemma}
The DTMC with the transition probabilities in (\ref{transitionerlang}) is ergodic for monotonic threshold policy where  $\tau_{1}$ is finite.
\label{ergodicitylemma}
\end{lemma}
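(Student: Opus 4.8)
The plan is to invoke the standard criterion that a finite-state Markov chain is ergodic precisely when it is irreducible and aperiodic, and to read both properties directly off the transition probabilities in (\ref{transitionerlang}). The state space is the finite set $\{0,1,\ldots,B-1\}$ of post-update energy levels, so positive recurrence is automatic once irreducibility is in hand; the only genuine work is to exhibit enough strictly positive one-step transitions, and for this the hypothesis $\tau_1<\infty$ (which by monotonicity forces every $\tau_\ell\le\tau_1<\infty$) is exactly what is needed.

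First I would isolate the two families of transitions that finiteness of the thresholds makes strictly positive. For a one-step decrease $j\to j-1$ with $j\ge 1$, the formula (\ref{transitionerlang}) with $i=j-1<B-1$ reduces, using $Y_0\equiv 0$, to $\Pr(Y_0\le\tau_{j-1})-\Pr(Y_1\le\tau_j)=e^{-\mu_H\tau_j}>0$, which is positive precisely because $\tau_j\le\tau_1<\infty$. For a one-step move $j\to i$ to any level $i\ge j$, the same formula gives $\Pr(Y_{1+i-j}\le\tau_i)-\Pr(Y_{2+i-j}\le\tau_{i+1})$; since $\tau_{i+1}\le\tau_i$ by monotonicity and the Erlang law $Y_{n+1}$ strictly stochastically dominates $Y_n$ on $(0,\infty)$, this difference is strictly positive whenever $\tau_i>0$. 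Thus from every state $j$ one can, in a single step, descend by one level or ascend to any admissible level.

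With these two facts, irreducibility follows by concatenating single steps: to reach a target $i$ from a source $j$, move directly in one step when $i\ge j$, and otherwise descend one level at a time, $j\to j-1\to\cdots\to i$, each step carrying the positive probability $e^{-\mu_H\tau_\ell}$; this also recovers the reachability of any state from any other in a bounded number of steps, as asserted in the main text. For aperiodicity I would exhibit a self-loop: taking $i=j=0$ in (\ref{transitionerlang}) with the convention $\tau_0=\infty$ gives $\Pr(Y_1\le\tau_0)-\Pr(Y_2\le\tau_1)=1-\Pr(Y_2\le\tau_1)>0$, again because $\tau_1<\infty$. A positive self-loop forces state $0$ to have period one, and irreducibility then propagates period one to the whole chain. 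Combining irreducibility, aperiodicity, and finiteness of the state space yields ergodicity.

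The main obstacle I anticipate is not the irreducibility bookkeeping but the verification that the ``upward/stay'' differences of Erlang CDFs are genuinely positive: this rests on the strict stochastic ordering $\Pr(Y_{n+1}\le\tau)<\Pr(Y_n\le\tau)$ for $\tau>0$ together with the thresholds being strictly positive, not merely finite. I would therefore make explicit that the argument uses finite and \emph{positive} thresholds, and treat any degenerate zero-threshold levels—where the reachable set could collapse and the physical interpretation (``send immediately'') is atypical—either separately or as excluded by assumption. Everything else is a direct consequence of the single stated hypothesis $\tau_1<\infty$, which simultaneously keeps the self-loop probability strictly below one and the descent probabilities strictly above zero.
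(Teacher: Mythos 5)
Your proposal is correct and follows essentially the same route as the paper's proof: positive one-step upward transitions $j\to i$ for $i\geq j$ via the monotonicity $\tau_{i+1}\leq\tau_i$ and strict stochastic dominance of Erlang CDFs, positive one-step descents $j\to j-1$ with probability $e^{-\mu_H\tau_j}>0$ from $\tau_j\leq\tau_1<\infty$, and concatenation to reach any state in at most $B-1$ steps. You in fact go slightly beyond the paper by explicitly verifying aperiodicity through the self-loop at state $0$ and by flagging that strict positivity (not just finiteness) of the thresholds is needed for the strict Erlang-dominance inequality---both points the paper leaves implicit.
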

\subsection{The Proof of Lemma \ref{ergodicitylemma}}
Consider an energy state $j$ in $[0,B-1]$. We will show that any other energy state $i$ is reachable from $j$ in at most $B-1$ steps with a positive probability. For $i\geq j$, the higher energy state $i$ is reachable from $j$ in one step with a positive probability as for $i=B-1$, $\Pr(Y_{B-j}\leq \tau_{B-1})$ is strictly positive and for $ j\leq i<B-1$:
\begin{align*}
&\Pr(Y_{1+i-j}\leq \tau_{i})- \Pr(Y_{2+i-j}\leq \tau_{i+1})\geq\\
&\Pr(Y_{1+i-j}\leq \tau_{i+1})- \Pr(Y_{2+i-j}\leq \tau_{i+1})>0,
\end{align*}

as $\tau_{i+1}\leq \tau_{i}$ and $i-j \geq 0$.

Similarly, the energy state $i=j-1$ for $j=1,...., B-1$ can be reached from $j$ with a probability $1-\Pr(Y_{1}\leq \tau_{j})$ which is stricly positive as $\tau_{j}$ is finite. This means that any state $i<j$ can be reached from $j$ in at most $B-1$ steps with a positive probability. 
\begin{lemma}
\label{ergoupdates}
For monotonic threshold policies with finite $\tau_{1}$,  the following is true:
\begin{equation}
\lim_{n\rightarrow +\infty}\frac{1}{n}\sum_{k=0}^{n}X_{k}=\sum_{j=0}^{B-1}\mathbb {E}\left[ X \mid j\right] \Pr(E=j) \mbox{\:\: w.p.1.}
\label{monotoneX}
\end{equation}
\begin{equation}
\lim_{n\rightarrow +\infty}\frac{1}{2n}\sum_{k=0}^{n}\mathbb {E}[X_{k}^{2}]=\frac{1}{2}\sum_{j=0}^{B-1} \mathbb {E}\left[ X^{2} \mid j\right] \Pr(E=j),
\label{monotoneC}
\end{equation}
where  $\Pr(E=j)$ is the steady-state probability for energy state $j$, $\mathbb {E}\left[ X \mid j\right] \triangleq \mathbb {E}\left[ X_{k} \mid E(Z_{k})=j\right]$ and $\mathbb {E}\left[ X^{2} \mid j\right] \triangleq \mathbb {E}\left[ X_{k}^{2} \mid E(Z_{k})=j\right]$.
\end{lemma}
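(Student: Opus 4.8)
The plan is to treat $\{E(Z_k)\}_{k\geq 0}$ as the underlying driving chain and to view each $X_{k+1}$ as a reward drawn, conditionally on $E(Z_k)=j$, from the law in (\ref{cdferlang}); both claims then reduce to ergodic averaging for a finite, ergodic Markov chain. First I would record the two facts that make everything integrable. By Lemma \ref{ergodicitylemma}, since $\tau_1$ is finite the chain with transition probabilities (\ref{transitionerlang}) is irreducible and aperiodic on the finite state space $\{0,\dots,B-1\}$, hence positive recurrent with a unique stationary law $\Pr(E=j)$ and $\Pr(E(Z_{k})=j)\to\Pr(E=j)$ as $k\to\infty$. Moreover, reading off (\ref{cdferlang}), for every state $j\geq 1$ one has $X\leq\tau_1$ almost surely (because $Y_{1-j}=0$ there), while for $j=0$ the inter-update time has an exponentially decaying tail; consequently $\mathbb{E}[X\mid j]$ and $\mathbb{E}[X^2\mid j]$ are finite for every $j$.

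For the almost-sure statement (\ref{monotoneX}) I would use a regenerative (renewal--reward) decomposition. Fix a reference state, say $j_0=B-1$, and let $0\le\nu_1<\nu_2<\cdots$ be the successive indices at which $E(Z_{\nu_m})=j_0$; positive recurrence guarantees infinitely many returns and a finite mean return time. Because, conditionally on the state sequence, each $X_{k+1}$ is drawn independently from the law determined by $E(Z_k)$, the pairs (cycle length, cycle reward $\sum_{k=\nu_m}^{\nu_{m+1}-1}X_{k+1}$) are i.i.d.\ across cycles with finite means. The renewal--reward theorem then yields
\[
\lim_{n\to\infty}\frac1n\sum_{k=0}^{n}X_{k}=\frac{\mathbb{E}[\text{reward per cycle}]}{\mathbb{E}[\text{length per cycle}]}=\sum_{j=0}^{B-1}\mathbb{E}[X\mid j]\,\Pr(E=j)\quad\text{w.p.\,1},
\]
the last equality being the standard identification of the cycle ratio with the stationary average. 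Equivalently, one may invoke the strong law of large numbers for additive functionals of a finite ergodic Markov chain applied to the Markov-modulated rewards $X_{k+1}$.

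Claim (\ref{monotoneC}) is a statement about expectations only, so it is softer and I would dispatch it by plain Ces\`aro averaging. Conditioning on the energy state gives $\mathbb{E}[X_k^2]=\sum_{j=0}^{B-1}\mathbb{E}[X^2\mid j]\,\Pr(E(Z_{k-1})=j)$, and since $\Pr(E(Z_{k-1})=j)\to\Pr(E=j)$ and the state space is finite, $\mathbb{E}[X_k^2]\to\sum_{j=0}^{B-1}\mathbb{E}[X^2\mid j]\,\Pr(E=j)$ as $k\to\infty$. The Ces\`aro mean of a convergent sequence converges to the same limit, so $\frac1n\sum_{k=0}^{n}\mathbb{E}[X_k^2]$ has that limit as well; dividing by $2$ gives (\ref{monotoneC}).

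The main obstacle I anticipate is the rigorous justification of the regenerative step for (\ref{monotoneX}): one must make precise that $X_{k+1}$ is conditionally independent of the past given $E(Z_k)$, so that the cycles are genuinely i.i.d., and verify the integrability hypotheses of the renewal--reward theorem. Both are secured by the structure noted above, since the conditional law (\ref{cdferlang}) depends on the past only through $E(Z_k)$ and the mixed bounded/exponential-tail behaviour of $X$ makes the per-cycle reward integrable, so the remaining work is bookkeeping rather than a genuine difficulty.
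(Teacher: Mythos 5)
Your proposal is correct, but it takes a genuinely different route from the paper's proof, which is a direct decomposition by energy state rather than a regenerative argument. The paper writes $\frac{1}{n}\sum_{k=0}^{n}X_{k}=\sum_{j=0}^{B-1}\frac{L_{j}}{n}\bigl(\frac{1}{L_{j}}\sum_{\ell=0}^{L_{j}}X_{\ell;j}\bigr)$, where $L_{j}$ counts the indices with $E(Z_{k})=j$, asserts that the per-state samples $X_{\ell;j}$ are i.i.d.\ with finite mean (this is exactly the conditional-independence point you flag as the main obstacle; the paper asserts it without elaboration), applies the SLLN within each state, and uses the Markov-chain ergodic theorem via Lemma \ref{ergodicitylemma} to get $L_{j}/n\to\Pr(E=j)$ w.p.\,1. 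Your renewal--reward decomposition over returns to a reference state proves the same a.s.\ limit; it costs you the extra bookkeeping of cycle integrability and the occupation-measure identification of $\Pr(E=j)$, but it handles the conditional-independence issue once, cleanly, at the cycle level, whereas the paper's i.i.d.\ claim for the randomly selected subsequence $X_{\ell;j}$ is the step a careful reader would want expanded. For (\ref{monotoneC}) your Ces\`aro argument is arguably \emph{cleaner} than the paper's: since (\ref{monotoneC}) concerns the deterministic sequence $\mathbb{E}[X_{k}^{2}]$, conditioning on the energy state and averaging suffices, while the paper reuses its sample-path argument verbatim with $X_{\ell;j}^{2}$ in place of $X_{\ell;j}$ and states an almost-sure limit for an expectation, conflating the two modes of convergence. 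One small caution on your side: you invoke aperiodicity to get $\Pr(E(Z_{k})=j)\to\Pr(E=j)$, but Lemma \ref{ergodicitylemma}'s proof establishes only irreducibility; aperiodicity does in fact hold here (one-step self-transitions have positive probability, e.g.\ $\Pr(Y_{1}\leq\tau_{B-1})>0$ from state $B-1$), and in any case your Ces\`aro step survives without it, since Ces\`aro averages of the state probabilities converge to the stationary law for any finite irreducible chain.
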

\begin{proof}
Consider:
\[
\frac{1}{n}\sum_{k=0}^{n} X_{k}=\frac{1}{n}\sum_{j=0}^{B-1}\sum_{\substack{k\in [0,n]\\ E(Z_{k})=j }} X_{k}=\frac{1}{n}\sum_{j=0}^{B-1}\displaystyle\sum_{\ell=0}^{L_{j}} X_{\ell;j},
\]
where $L_{j}$ is the number of $k$s in $[0,n]$ such that $E(Z_{k})=j$ and $X_{\ell;j}$ is a r.v. with the CDF $\Pr(X_{\ell;j}\leq x)= \Pr(X_{\ell}\leq x \mid  E(Z_{\ell})=j)$.

Note that the sequence $X_{0;j}, X_{1;j}, ...,X_{L_{j};j}$ is i.i.d. for any $j$ and their mean is bounded as all thresholds are finite, hence:
\[
\displaystyle\lim_{L_{j}\rightarrow \infty} \frac{1}{L_{j}}\displaystyle\sum_{\ell=0}^{L_{j}} X_{\ell;j}= \mathbb {E}\left[ X \mid j\right], w.p.1.
\]
Due to the ergoditicity of $E(Z_{k})$s (Lemma \ref{ergodicitylemma}):
\[
\displaystyle\lim_{n\rightarrow \infty}\frac{L_{j}}{n}=\Pr(E=j), w.p.1.
\]
Therefore,
\begin{align*}
\displaystyle\lim_{n\rightarrow \infty}\frac{1}{n}\displaystyle\sum_{k=0}^{n} X_{k}&=\lim_{n\rightarrow \infty}\displaystyle\sum_{j=0}^{B-1}\frac{L_{j}}{n}(\frac{1}{L_{j}}\displaystyle\sum_{\ell=0}^{L_{j}} X_{\ell;j}),\\
&=\displaystyle\sum_{j=0}^{B-1} \mathbb {E}\left[ X \mid j\right] \Pr(E=j), w.p.1.
\end{align*}
Similarly,
\begin{align*}
\displaystyle\lim_{n\rightarrow \infty}\frac{1}{n}\displaystyle\sum_{k=0}^{n} \mathbb {E}[X_{k}^{2}]&=\lim_{n\rightarrow \infty}\displaystyle\sum_{j=0}^{B-1}\frac{L_{j}}{n}(\frac{1}{L_{j}}\displaystyle\sum_{\ell=0}^{L_{j}} X_{\ell;j}^{2})\\
&=\displaystyle\sum_{j=0}^{B-1} \mathbb {E}\left[ X^{2} \mid j\right] \Pr(E=j), w.p.1.
\end{align*}
\end{proof}
\subsection{The proof of Theorem \ref{fixedpthreshold}}
By Lemma \ref{convage} and Lemma \ref{ergoupdates}, the average age can be written as follows:
\[
\bar{\Delta}=\frac{\sum_{j=0}^{B-1} \mathbb {E}\left[ X^{2} \mid j\right] \Pr(E=j)}{2 \sum_{j=0}^{B-1} \mathbb {E}\left[ X \mid j\right] \Pr(E=j)}.
\]
Define $D(\tau)$ as in below:
\[
D(\tau)=\displaystyle\sum_{j=0}^{B-1} (\mathbb {E}\left[ X^{2} \mid j\right]- 2\bar{\Delta}_{B}^{M}\mathbb {E}\left[ X \mid j\right])\Pr(E=j),
\]
where $\bar{\Delta}_{B}^{M}=\displaystyle\min_{\pi\in\Pi}\bar{\Delta}$.

By Lemma \ref{diffmoments}:
\[
\frac{\partial}{\partial \tau_{B}}\mathbb {E}\left[ X^{2} \mid j\right]=2\tau_{B}\frac{\partial}{\partial \tau_{B}}\mathbb {E}\left[ X \mid j\right].
\]
Accordingly, as $\frac{\partial}{\partial \tau_{B}}\Pr(E=j)=0$, 
\[
\frac{\partial}{\partial \tau_{B}}D(\tau)=2 (\tau_{B}-\bar{\Delta}_{B}^{M})\displaystyle\sum_{j=0}^{B-1} \frac{\partial}{\partial \tau_{B}}\mathbb {E}\left[ X \mid j\right] \Pr(E=j),
\]
$\frac{\partial}{\partial \tau_{B}}D(\tau)$ can be also written as:
\[
\frac{\partial}{\partial \tau_{B}}D(\tau)=2 (\tau_{B}-\bar{\Delta}_{B}^{M})\frac{\partial \bar{X}}{\partial \tau_{B}},
\]
where $\bar{X}=\displaystyle\sum_{j=0}^{B-1} \mathbb {E}\left[ X \mid j\right] \Pr(E=j)$.

It can be seen that $\frac{\partial \bar{X}}{\partial \tau_{B}}\geq 0$ for any $\tau_{B} \geq 0$ which means $\frac{\partial}{\partial \tau_{B}}D(\tau)$ can only change its sign around $\tau_{B}=\bar{\Delta}_{B}^{M}$. As $D(\tau)\geq 0$ and $D(\arg\min_{\tau_{B}:\pi \in \Pi^{\rm{MT}}}\bar{\Delta})=0$ by its definition, for $\tau$ that achieves $\bar{\Delta}_{B}^{M}$, $\tau_{B}=\bar{\Delta}_{B}^{M}$.
\subsection{The Proof of Theorem \ref{B2age}}

By Lemma \ref{convage} and Lemma \ref{ergoupdates}, $\bar{\Delta}$ for $B=2$ is the following:
\begin{equation}
\label{B2delta}
\bar{\Delta}=\frac{1}{2}\frac{\mathbb {E}\left[ X^{2} \mid j=0\right]\Pr(E=0)+\mathbb {E}\left[ X^{2} \mid j=1\right]\Pr(E=1)}{\mathbb {E}\left[ X \mid j=0\right]\Pr(E=0)+\mathbb {E}\left[ X \mid j=1\right]\Pr(E=1)}.
\end{equation}
The probability of being in $E=1$, i.e. $\Pr(E=1)$ can be solved using:
\begin{equation}
\label{steadyE}
\Pr(E=1)=\sum_{j=0}^{1}\Pr(E(Z_{k+1})=1 \mid  E(Z_{k})=j)\Pr(E=j).
\end{equation}
Combining (\ref{steadyE}) and (\ref{cdferlang}),
\begin{equation}
\label{steadyexp}
\Pr(E=1)=\frac{e^{-\mu_{H}\tau_{1}}}{1-e^{-\mu_{H}\tau_{1}}\mu_{H}\tau_{1}}.
\end{equation}
Now, we can obtain $\mathbb {E}\left[ X^{2} \mid j\right]$, $\mathbb {E}\left[ X \mid j\right]$ using (\ref{cdferlang}). Combining these with (\ref{steadyexp}) and substituting in (\ref{B2delta}) gives  (\ref{B2age}).

\begin{lemma}
\label{convage}
For a threshold policy where $\tau_{1}$ is finite, the average age $\bar{\Delta}$ is finite (w.p.1.) and given by the following expression.
\begin{equation}
\bar{\Delta}=\frac{\lim_{n\rightarrow +\infty}\frac{1}{2n}\sum_{k=0}^{n}\mathbb {E}[X_{k}^{2}]}{\lim_{n\rightarrow +\infty}\frac{1}{n}\sum_{k=0}^{n}X_{k}} \mbox{\:\: w.p.1.}
\end{equation}
\end{lemma}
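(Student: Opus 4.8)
The plan is to exploit the sawtooth geometry of $\Delta(t)$ under any threshold policy. Between consecutive update instants $Z_{k-1}$ and $Z_k$ the age starts at $0$ and rises with unit slope, so the area it sweeps over $[Z_{k-1},Z_k]$ is exactly the triangle $\tfrac12 X_k^2$. Writing $n=N_U(t_f)$ for the number of updates sent by the horizon $t_f$, this gives the exact pathwise identity
\begin{equation*}
\int_0^{t_f}\Delta(t)\,dt=\frac12\sum_{k=1}^{n}X_k^2+\frac12\bigl(t_f-Z_n\bigr)^2,
\end{equation*}
which is precisely the decomposition already used in (\ref{avaged2}). Dividing by $t_f$ and using $Z_n=\sum_{k=1}^n X_k\le t_f< Z_{n+1}$, I would rewrite the normalized cumulative age as the ratio of the empirical averages $\tfrac1{2n}\sum_{k=1}^n X_k^2$ and $\tfrac1n\sum_{k=1}^n X_k$, multiplied by the correction factor $Z_n/t_f$, plus the boundary residual $\tfrac{1}{2t_f}(t_f-Z_n)^2$.

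Next I would establish the finiteness and tail control that make every term well defined. Because $\tau_1<\infty$ is the largest threshold, an inter-update interval can exceed $x$ only if the source has been unable to update; once the battery is nonempty an update is forced within $\tau_1$ units of time, and while the battery is empty a long interval requires a correspondingly long gap with no Poisson energy arrival. Hence $\Pr(X_k\ge x)\in O(e^{-\mu_H x})$, exactly as argued for the optimal stopping policy in the proof of Theorem \ref{existopthreshold}. This exponential tail yields finite $\mathbb{E}[X_k]$ and $\mathbb{E}[X_k^2]$, so the numerator and denominator limits are finite and the denominator limit is strictly positive; it also gives $Z_n\to\infty$ and $X_{n+1}=o(\sqrt{t_f})$ w.p.1, whence both $\tfrac{1}{2t_f}(t_f-Z_n)^2\to 0$ (since $0\le t_f-Z_n< X_{n+1}$) and $Z_n/t_f\to 1$.

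Finally I would invoke the ergodic averaging already established in Lemma \ref{ergoupdates}. By (\ref{monotoneX}) and (\ref{monotoneC}), together with the ergodicity of the sampled energy chain from Lemma \ref{ergodicitylemma}, the empirical averages $\tfrac1n\sum_k X_k$ and $\tfrac1{2n}\sum_k X_k^2$ converge w.p.1 to the finite constants $\sum_j \mathbb{E}[X\mid j]\Pr(E=j)$ and $\tfrac12\sum_j \mathbb{E}[X^2\mid j]\Pr(E=j)$. Letting $t_f\to\infty$ (hence $n\to\infty$) and combining with $Z_n/t_f\to 1$ and the vanishing residual gives the claimed ratio. The mismatched form of the statement, with $\mathbb{E}[X_k^2]$ in the numerator but $X_k$ itself in the denominator, is harmless: the denominator converges to a deterministic limit so the expectation is immaterial there, and (\ref{monotoneC}) shows the numerator average converges to the same value whether or not the per-term expectation is taken.

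The main obstacle will be the joint bookkeeping of the random update count $n=N_U(t_f)$ against the continuous horizon $t_f$, namely rigorously showing $t_f/Z_n\to 1$ and that the final partial interval is asymptotically negligible, while verifying that the almost-sure limits of the empirical averages survive the normalization and line up with the stated expectation-in-numerator form. This is in essence a Markov-renewal-reward argument, and the exponential tail obtained from $\tau_1<\infty$ is precisely what renders every moment and every limit finite and well behaved.
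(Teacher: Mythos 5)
Your proposal is correct and takes essentially the same route as the paper: the paper proves Lemma \ref{convage} by appealing to the renewal-reward theorem (Theorem 5.4.5 of \cite{gallager2013stochastic}), generalized to non-i.i.d.\ inter-update times whose empirical averages still converge w.p.1, which is precisely the sawtooth decomposition (\ref{avaged2}) plus the ergodic averaging of Lemma \ref{ergoupdates} that you spell out. The only difference is one of detail, not of method: you explicitly supply the ingredients the paper delegates to the citation, namely the exponential tail bound $\Pr(X_k\geq x)\in O(e^{-\mu_H x})$ from finite $\tau_1$ (which, as you implicitly use, requires $\tau_\ell\leq\tau_1$, i.e.\ the monotone setting in which the lemma is applied), the vanishing boundary term, and the $Z_n/t_f\to 1$ bookkeeping.
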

\begin{proof}
The proof is a generalization of Theorem 5.4.5 in \cite{gallager2013stochastic} for the case where $X_{k}$s are non-i.i.d. but the limits still exist (w.p.1.). When $X_{k}$s  are i.i.d. with $\mathbb {E}[X_{k}]< \infty$ and $\mathbb {E}[X_{k}^{2}]< \infty$, the convergence (w.p.1.) of the limits is guaranteed.
\end{proof}


\end{document}